
\documentclass{article}

\usepackage{microtype}
\usepackage{graphicx}
\usepackage{subfigure}
\usepackage{booktabs} 
\usepackage{hyperref}



\usepackage[accepted]{icml2025}

\usepackage{amsmath}
\usepackage{amssymb}
\usepackage{mathtools}
\usepackage{amsthm}

\usepackage[capitalize,noabbrev]{cleveref}

\theoremstyle{plain}
\newtheorem{theorem}{Theorem}[section]
\newtheorem{proposition}[theorem]{Proposition}

\theoremstyle{definition}
\newtheorem{definition}[theorem]{Definition}

\theoremstyle{remark}
\newtheorem{remark}[theorem]{Remark}

\usepackage[textsize=tiny]{todonotes}

\usepackage{bbm}
\usepackage{bm}
\usepackage{utfsym}
\usepackage{threeparttable}
\usepackage{multirow}

\def\bbE{\mathbb{E}}
\def\bbH{\mathbb{H}}
\def\bbP{\mathbb{P}}

\def\calF{\mathcal{F}}
\def\calH{\mathcal{H}}
\def\calN{\mathcal{N}}

\def\FDR{\operatorname{FDR}}
\def\FDP{\operatorname{FDP}}

\icmltitlerunning{e-GAI: e-value-based Generalized $\alpha$-Investing for Online False Discovery Rate Control}

\begin{document}

\twocolumn[
\icmltitle{e-GAI: e-value-based Generalized $\alpha$-Investing for \\ Online False Discovery Rate Control}

\icmlsetsymbol{equal}{*}

\begin{icmlauthorlist}
\icmlauthor{Yifan Zhang}{sjtu}
\icmlauthor{Zijian Wei}{sjtu}
\icmlauthor{Haojie Ren}{sjtu}
\icmlauthor{Changliang Zou}{nku}

\end{icmlauthorlist}

\icmlaffiliation{sjtu}{School of Mathematical Sciences, Shanghai Jiao Tong University, Shanghai, China}
\icmlaffiliation{nku}{School of Statistics and Data Sciences, LPMC, KLMDASR and LEBPS, Nankai University, Tianjin, China}

\icmlcorrespondingauthor{Haojie Ren}{haojieren@sjtu.edu.cn}

\icmlkeywords{online multiple testing, generalized $\alpha$-investing, e-value, false discovery rate}

\vskip 0.3in
]

\printAffiliationsAndNotice{}  

\begin{abstract}
Online multiple hypothesis testing has attracted a lot of attention in many applications, e.g., anomaly status detection and stock market price monitoring. The state-of-the-art generalized $\alpha$-investing (GAI) algorithms can control online false discovery rate (FDR) on p-values only under specific dependence structures, a situation that rarely occurs in practice. The e-LOND algorithm \citep{xu2024online} utilizes e-values to achieve online FDR control under arbitrary dependence but suffers from a significant loss in power as testing levels are derived from pre-specified descent sequences. To address these limitations, we propose a novel framework on valid e-values named e-GAI. The proposed e-GAI can ensure provable online FDR control under more general dependency conditions while improving the power by dynamically allocating the testing levels. These testing levels are updated not only by relying on both the number of previous rejections and the prior costs, but also, differing from the GAI framework, by assigning less $\alpha$-wealth for each rejection from a risk aversion perspective. Within the e-GAI framework, we introduce two new online FDR procedures, e-LORD and e-SAFFRON, and provide strategies for the long-term performance to address the issue of $\alpha$-death, a common phenomenon within the GAI framework. Furthermore, we demonstrate that e-GAI can be generalized to conditionally super-uniform p-values. Both simulated and real data experiments demonstrate the advantages of both e-LORD and e-SAFFRON in FDR control and power. 
\end{abstract}  

\section{Introduction}

The online multiple hypothesis testing problem arises from a range of applications. For example, regulators record the number of NYC taxi passengers every 30 minutes, aiming to detect the anomalous intervals corresponding to special events \citep{lavin2015evaluating}; economists build online monitoring procedures based on monthly stock market prices to identify bubbles in financial series \citep{genoni2023dating}; industrial factories monitor machine operation status in real time for fault detection, thereby enabling early warnings for potential system issues \citep{ahmad2017unsupervised}.
These different scenarios can all be formulated as the online multiple testing problem, which is concerned with the investigation of an online sequence of null hypotheses. At each time $t$, we must immediately make a real-time decision on whether to reject the current hypothesis based on all the observed data so far, without having knowledge of future data or the total number of hypotheses. 

Consider an online sequence of null hypotheses $\bbH_1,\ldots,\bbH_t,\ldots$. Define $\theta_t=0/1$ if $\bbH_t$ is true/false for each time $t$ and a class of online decision rules ${\bm\delta}_t=\{\delta_j:j=1,\ldots,t\}$, where $\delta_t=1$ indicates that $\bbH_t$ is rejected and $\delta_t=0$ otherwise. It is necessary to control the error rates of those decisions ${\bm\delta}_t$. A natural quantity to control is the false discovery rate (FDR) at target level $\alpha$ as introduced by \citet{benjamini1995controlling}, that is, the ratio of falsely rejected nulls to the total number of rejections. The online FDR is defined as:  
\begin{equation*}
    \FDR(t)= \bbE[\FDP(t)]=\bbE\left[\frac{\sum_{j\in\calH_0(t)}\delta_j}{\left(\sum_{j=1}^{t}\delta_j\right)\vee 1}\right]\leq \alpha, 
\end{equation*}
where $\calH_0(t)=\{j\leq t:\theta_j=0\}$ is the true null set up to time $t$. 

\paragraph{Related works.}
Methods for online FDR control were pioneered by \citet{foster2008alpha}, who proposed the so-called $\alpha$-investing (AI) strategy. 
It was further extended by the generalized $\alpha$-investing (GAI) procedure, which has served as the fundamental framework for online testing problem \citep{aharoni2014generalized,javanmard2018online}. GAI deals with a sequence of p-values $p_1,\ldots,p_t,\ldots$ and encompasses a wide class of algorithms to assign the testing levels $\alpha_1,\ldots,\alpha_t$ in an online fashion, effectively rejecting the $t$-th null hypothesis whenever $p_t\leq\alpha_t$ \citep{ramdas2017online,ramdas2018saffron,tian2019addis}. The GAI ensures provable online FDR control only for the p-values with independence or positive regression dependence on a subset (PRDS; \citealp{benjamini2001control}), which is rarely the case in practice; see definition of PRDS in \cref{sec:PRDS_def}. In fact, many applications involve unknown complex dependence, such as time dependence in stock market prices.
Hence, it is important to develop online multiple testing procedures under more general dependency conditions.

\citet{xu2022dynamic} proposed the SupLORD algorithm with the aim of false discovery exceedance control and discovered that, under a weaker baseline assumption, i.e., the null p-values are conditionally super-uniform as formalized in \eqref{eq:property_of_online_pval} in the subsequent text, it ensures valid FDR control at arbitrary stopping times. However, the SupLORD algorithm necessitates the selection of multiple parameters, which are intricately linked to its performance, and currently lacks well-established criteria for their optimal selection. 

To deal with the problem introduced by dependence, another strategic direction is proposed by \citet{wang2022false} to utilize e-values as potential alternatives to p-values as measures of uncertainty, significance, and evidence. The e-values have gained considerable attention, and many works have devoted significant effort to constructing valid e-values \citep{vovk2021values,ren2024derandomised,li2025note} and applying e-values to ensure offline FDR control \citep{wang2022false}.  
In the online testing problem, a related work is \citet{xu2024online}, which exploited e-values and proposed the e-LOND algorithm to provide online FDR control under arbitrary, possibly unknown, dependence. However, the e-LOND algorithm does not make full use of the entire error budget and assigns testing levels only by some pre-specified descending sequences related to the number of rejections. Hence, e-LOND yields conservative FDR and sacrifices power, which hampers its practical use. 
\citet{xu2024online} proposed to improve e-LOND by incorporating independent randomization, though this operation only achieves a smaller improvement in power while introducing additional randomness. 

Therefore, a natural question is whether it is possible to construct a GAI-like framework based on e-values to achieve online FDR control under more general dependence, i.e., conditional validity in \eqref{eq:property_of_online_eval}, while efficiently and effectively assigning testing levels to achieve high power.

\paragraph{Our contributions.}
To address this challenge, this paper proposes a novel framework based on e-values, named e-value-based generalized $\alpha$-investing (e-GAI). Our contributions are summarized as follows:

\begin{itemize}
    \item The e-GAI framework ensures online FDR control based on conditional valid e-values with theoretical guarantees, which is achieved through a new FDP estimator. In contrast to GAI, we propose a novel investing strategy named risk aversion investing (RAI) built on the new FDP estimator, enabling e-GAI to dynamically allocate testing levels based on both prior rejections and costs and assign less $\alpha$-wealth for each rejection to save budget. 
    \item Within the e-GAI framework, we propose two new algorithms called e-LORD and e-SAFFRON. Furthermore, considering the long-term performance, we propose corresponding algorithms, mem-e-LORD and mem-e-SAFFRON, to address the issue of $\alpha$-death, a common phenomenon in the GAI framework. 
    \item Moreover, the e-GAI framework can be generalized to conditionally super-uniform p-values while preserving guaranteed FDR control. Numerical results demonstrate that the algorithms within the e-GAI framework are effective for online FDR control and achieve higher power compared to existing methods. 
\end{itemize}

We compare the e-GAI with several commonly used algorithms and summarize their characteristics in \cref{tab:main}. 

\begin{table*}[t]
\caption{Online testing algorithms with their properties and performance of FDR control.} 
\label{tab:main}
\vskip 0.1in
\begin{center}
\begin{small}
\begin{tabular}{llccc}
\toprule
Framework & Algorithm & Statistics & Dependence conditions & $\alpha_t$ relying on prior costs \\
\midrule
\multirow{3}{*}{GAI} & LORD++ \citep{ramdas2017online} & p-value & Independence or PRDS & \usym{2714} \\
& SAFFRON \citep{ramdas2018saffron} & p-value & Independence or PRDS & \usym{2714} \\
& SupLORD \citep{xu2022dynamic} & p-value & \eqref{eq:property_of_online_pval} & \usym{2714} \\
\\
\multirow{3}{*}{e-GAI} & e-LOND \citep{xu2024online} & e-value & Arbitrary dependence & \usym{2718} \\
& \textbf{e-LORD} & e-value & \eqref{eq:property_of_online_eval} & \usym{2714} \\
& \textbf{e-SAFFRON} & e-value & \eqref{eq:property_of_online_eval} & \usym{2714} \\
\bottomrule
\end{tabular}
\end{small}
\end{center}
\vskip -0.1in
\end{table*}

\section{Preliminaries}\label{sec:preliminaries}

\subsection{p-values \& e-values}

In this paper, the goal is to make a real-time decision $\delta_t$ while controlling online FDR at a user-specific level $\alpha$. The rejection decision $\delta_t$ with p-values or e-values is defined as, respectively, 
\begin{equation*}
    \delta_t=\begin{cases}
        \mathbbm{1}\left\{p_t\leq\alpha_t\right\}, & \text{ if using p-values},\\
        \mathbbm{1}\left\{e_t\geq\frac{1}{\alpha_t}\right\}, & \text{ if using e-values}.
    \end{cases} 
\end{equation*}

Denote $\calF_t=\sigma(\delta_1,\ldots,\delta_t)$ as the sigma-field at time $t$, which is generated by historical decisions in the past. 
Here the testing level $\alpha_t$ is required to be predictable at time $t$, that is $\alpha_t$ is $\calF_{t-1}$-measurable, i.e. $\alpha_t\in\calF_{t-1}$.

In the studies of online testing, a valid p-value $p_t$ satisfies the conditionally super-uniform property under the null: 
\begin{equation}\label{eq:property_of_online_pval}
    \bbP(p_t\leq u\mid \calF_{t-1})\leq u \text{ for all } u\in[0,1] \text{ if } \theta_t=0.
\end{equation}
Meanwhile, a non-negative variable $e_t$ is a valid e-value if it satisfies the conditional validity: 
\begin{equation}\label{eq:property_of_online_eval}
    \bbE[e_t \mid \calF_{t-1}]\leq 1 \text{ if } \theta_t=0. 
\end{equation}

In contrast to the condition on the distribution of a p-value in \eqref{eq:property_of_online_pval}, \eqref{eq:property_of_online_eval} only requires that the expectation of the e-value exists and is bounded. This relaxed restriction provides greater flexibility in constructing valid e-values for various practical purposes \citep{vovk2021values,wang2022false,ren2024derandomised,li2025note}.

\subsection{Recap: GAI}\label{sec:GAI}

The GAI rules are capable of handling an infinite stream of hypotheses and incorporating informative domain knowledge into a dynamic decision-making process. Beginning with a pre-specified $\alpha$-wealth, the key idea in GAI algorithms is that each rejection gains some extra $\alpha$-wealth, which may be subsequently used to make more discoveries at later time points. 

\citet{ramdas2017online} provided a statistical perspective on online FDR procedures and proposed to design new algorithms by keeping an estimate of online FDP less than $\alpha$.  
Specifically, \citet{ramdas2017online} proposed an oracle approximation of online FDP as: 
\begin{equation}\label{eq:oracle_FDP_estimate}
    \FDP^{*}(t)=\frac{\sum_{j\in\calH_0(t)}\alpha_j}{\left(\sum_{j=1}^{t}\delta_j\right)\vee 1}. 
\end{equation}
This $\FDP^{*}(t)$ overestimates the unknown $\FDP(t)$ and provides guidance for online FDR procedures based on independent p-values, including LORD++ \citep{ramdas2017online}, SAFFRON \citep{ramdas2018saffron} and ADDIS \citep{tian2019addis} algorithms.

Specifically, LORD++ \citep{ramdas2017online} realizes online FDR control by providing a simple upper bound of $\FDP^*(t)$: 
\begin{equation}\label{eq:LORD_upper_bound}
    \widehat{\FDP}^{\operatorname{LORD}}(t)=\frac{\sum_{j=1}^{t} \alpha_j}{\left(\sum_{j=1}^{t}\delta_j\right) \vee 1}. 
\end{equation}
If the proportion of alternatives is non-negligible, then LORD++ with $\widehat{\FDP}^{\operatorname{LORD}}(t)$ yields very conservative results due to the overestimation of $\FDP^{*}(t)$. 

Motivated by Storey-BH \citep{storey2002direct}, SAFFRON \citep{ramdas2018saffron} was derived from an adaptive upper bound estimate by approximating the proportions of nulls: 
\begin{equation}\label{eq:SAFFRON_upper_bound}
    \widehat{\FDP}^{\operatorname{SAFFRON}}(t)=\frac{\sum_{j=1}^{t} \alpha_j\frac{\mathbbm{1}\{p_j>\lambda\}}{1-\lambda}}{\left(\sum_{j=1}^{t}\delta_j\right) \vee 1}, 
\end{equation}
where $\lambda\in(0,1)$ is a user-chosen parameter. 

If the null p-values are independent of each other and of the non-nulls, and $\{\alpha_t\}$ is chosen to be a (coordinate-wise) monotone function of $\bm{\delta}_{t-1}$, then LORD++ and SAFFRON control the FDR at all times. 
\citet{fisher2024online} considered the performance of LORD++ and SAFFRON and proved online FDR control when the popular PRDS condition \citep{benjamini2001control} holds. However, the conditions of independence or PRDS are usually violated in practical applications.

\section{e-GAI: e-value-based GAI}\label{sec:our_framework}

In this section, we first define the oracle estimate of FDP that is tailored for e-values and demonstrate the theoretical results for FDR control (\cref{sec:oracle_FDP_e_value}). We then design a new investing strategy based on the new proposed FDP estimator, and propose our testing algorithms, e-LORD (\cref{sec:e-LORD}) and e-SAFFRON (\cref{sec:e-SAFFRON}), from the risk aversion perspective to optimize the use of a limited budget.

\subsection{Online FDR Control with e-values}\label{sec:oracle_FDP_e_value}

Suppose we observe valid e-values $e_1,\ldots,e_t,\ldots$ and make decision $\delta_t=\mathbbm{1}\{e_t\geq 1/\alpha_t\}$ at each time $t$. 
Inspired by \eqref{eq:oracle_FDP_estimate}, we define an oracle e-value-based estimate of FDP and bound this overestimate to realize FDR control. 
Denote $R_t=\sum_{j=1}^{t}\delta_j$ as the rejection size up to $t$.

\begin{theorem}\label{thm:main}
   Suppose the online e-values are valid in \eqref{eq:property_of_online_eval}. 
   Let the oracle e-value-based estimate of FDP be given as
    \begin{equation}\label{eq:oracle_e_FDP_estimate}
        \FDP^{*}_{\operatorname{e}}(t)=\sum_{j\in\calH_0(t)}\frac{\alpha_j}{R_{j-1}+1}. 
    \end{equation}
    If $\bbE\left[\FDP^{*}_{\operatorname{e}}(t)\right]\leq \alpha$,
    then $\FDR(t)\leq\alpha$ for all $t$. 
\end{theorem}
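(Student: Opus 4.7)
The plan is to establish the chain of inequalities $\mathrm{FDR}(t) = \mathbb{E}[\mathrm{FDP}(t)] \leq \mathbb{E}[\mathrm{FDP}^*_{\mathrm{e}}(t)] \leq \alpha$, with the second inequality given by hypothesis. So the whole task reduces to showing that $\mathrm{FDP}^*_{\mathrm{e}}(t)$ dominates $\mathrm{FDP}(t)$ in expectation.

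First, I would handle the denominator. For any index $j$ with $\delta_j = 1$, we have $R_j = R_{j-1} + 1$ and hence $R_t \vee 1 \geq R_j \geq R_{j-1} + 1$; when $\delta_j = 0$ the term contributes nothing. So termwise,
\begin{equation*}
\frac{\delta_j}{R_t \vee 1} \leq \frac{\delta_j}{R_{j-1}+1}.
\end{equation*}
Summing over $j \in \mathcal{H}_0(t)$ gives $\mathrm{FDP}(t) \leq \sum_{j \in \mathcal{H}_0(t)} \delta_j/(R_{j-1}+1)$.

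Next, I would handle the numerator via a Markov-type inequality for e-values. Since $e_j \geq 0$ and $\alpha_j > 0$, the rejection indicator satisfies $\delta_j = \mathbbm{1}\{e_j \geq 1/\alpha_j\} \leq \alpha_j e_j$ pointwise. Plugging this in and taking expectation yields
\begin{equation*}
\mathbb{E}[\mathrm{FDP}(t)] \leq \sum_{j \in \mathcal{H}_0(t)} \mathbb{E}\!\left[\frac{\alpha_j e_j}{R_{j-1}+1}\right].
\end{equation*}
Because both $\alpha_j$ (by predictability) and $R_{j-1}$ are $\mathcal{F}_{j-1}$-measurable, I can pull $\alpha_j/(R_{j-1}+1)$ outside the inner conditional expectation. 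The conditional validity of e-values, $\mathbb{E}[e_j \mid \mathcal{F}_{j-1}] \leq 1$ for null $j$, then gives
\begin{equation*}
\mathbb{E}\!\left[\frac{\alpha_j e_j}{R_{j-1}+1}\right] = \mathbb{E}\!\left[\frac{\alpha_j}{R_{j-1}+1}\mathbb{E}[e_j \mid \mathcal{F}_{j-1}]\right] \leq \mathbb{E}\!\left[\frac{\alpha_j}{R_{j-1}+1}\right],
\end{equation*}
and summing recovers exactly $\mathbb{E}[\mathrm{FDP}^*_{\mathrm{e}}(t)]$, finishing the proof.

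I do not see a real obstacle; the only places to be careful are (i) arguing the denominator bound cleanly, noting that the $\vee 1$ is genuinely needed only when all $\delta_j = 0$ (in which case the numerator vanishes), and (ii) making explicit that $\alpha_j$ and $R_{j-1}$ are $\mathcal{F}_{j-1}$-measurable so that the conditioning step is legitimate. No dependence assumption among the e-values is invoked anywhere, which is consistent with the paper's claim that e-GAI works under conditional validity alone.
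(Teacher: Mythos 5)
Your proposal is correct and follows essentially the same route as the paper's own proof: the termwise bound $R_{j-1}+1\leq R_t\vee 1$ on rejected null indices, the Markov-type bound $\mathbbm{1}\{e_j\geq 1/\alpha_j\}\leq \alpha_j e_j$, and iterated expectations using $\bbE[e_j\mid\calF_{j-1}]\leq 1$ with $\alpha_j$ and $R_{j-1}$ being $\calF_{j-1}$-measurable. Your explicit remarks on the measurability and on when the $\vee 1$ matters are fine points the paper leaves implicit, but the argument is the same.
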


\cref{thm:main} provides a general theoretical result for FDR control with conditionally valid dependent e-values, guiding and inspiring the construction of testing levels $\{\alpha_t\}$, which will be detailed in the following subsection. 
The proof of \cref{thm:main} and any other necessary proofs will be detailed in \cref{appendix:proofs}.

Before pursuing further, we discuss the effect of the denominator $R_{j-1}+1$ of $\FDP^{*}_{\operatorname{e}}(t)$ in \eqref{eq:oracle_e_FDP_estimate}. To avoid the dependence inflating FDR, the denominator $R_{j-1}+1$ of $\FDP^{*}_{\operatorname{e}}(t)$ plays an important role in ``predicting'' the number of possible future rejections at each rejection moment. As the true denominator $R_t\vee 1$ of FDP is unobservable at time $j-1$ and $R_{j-1}+1\leq (R_t\vee 1)$ holds for each $j\in\{\ell\leq t:\delta_\ell=1\}$, we use $R_{j-1}+1$ as a substitute that serves as a $(j-1)$-measurable lower bound for $R_t\vee 1$. Since it is placed in the denominator, this results in an overestimation of the oracle FDP, which can subsequently be leveraged to achieve FDR control. 

When complex dependence exists, the correlation between the number of false rejections and the total number of rejections cannot be characterized, making it impossible to control their proportion. For instance, consider that the sequential e-values are strongly positively correlated. When a false rejection occurs, it indicates that the next e-value is likely to belong to the null but be falsely rejected as well, leading to an increased FDR. Hence, it implies that one can expect a quantity between $R_{j-1}+1$ and $R_t$ that provides a more efficient approximation for the denominator of $\FDP_{\operatorname{e}}^*(t)$ when knowing some specific dependence structure among e-values, which warrants further study for improving the efficiency of e-GAI. Typically, the denominator of $\FDP^{*}_{\operatorname{e}}(t)$ in \eqref{eq:oracle_e_FDP_estimate} can be directly chosen as $R_t$ for independent e-values; see details in \cref{appendix:independent_eGAI_GAI}.

\subsection{e-LORD}\label{sec:e-LORD}

Inspired by the GAI framework, we design testing levels $\alpha_t$ by proposing an upper bound for \eqref{eq:oracle_e_FDP_estimate} to realize FDR control according to \cref{thm:main}. 

One natural overestimate of $\FDP^{*}_{\operatorname{e}}(t)$ is to define
\begin{equation}\label{eq:simple_upper_bound}
    \widehat{\FDP}_{\operatorname{e}} ^{\operatorname{LORD}}(t):=\sum_{j=1}^{t}\frac{\alpha_j}{R_{j-1}+1}.
\end{equation}
Since $\widehat{\FDP}_{\operatorname{e}}^{\operatorname{LORD}}(t)\geq\FDP_{\operatorname{e}}^{*}(t)$, any rejection rule algorithm assigning $\alpha_t$ in an online fashion such that $\widehat{\FDP}_{\operatorname{e}}^{\operatorname{LORD}}(t)\leq \alpha$ holds for all $t$ can control online FDR. 

\begin{proposition}\label{prop:simple_FDR_control}
    Suppose online e-values are valid in \eqref{eq:property_of_online_eval}. For $\alpha_t\in\mathcal{F}_{t-1}$ satisfying  $\widehat{\FDP}_{\operatorname{e}} ^{\operatorname{LORD}}(t)\leq\alpha$, 
    we have $\FDR(t)\leq \alpha$ for all $t$. 
\end{proposition}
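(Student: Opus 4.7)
The plan is to view this proposition as a direct corollary of \cref{thm:main}. I would first verify that $\widehat{\FDP}_{\operatorname{e}}^{\operatorname{LORD}}(t)$ is a pathwise overestimate of $\FDP^{*}_{\operatorname{e}}(t)$, and then feed the hypothesis into \cref{thm:main} to obtain the desired FDR bound.

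First I would observe that in the definitions \eqref{eq:oracle_e_FDP_estimate} and \eqref{eq:simple_upper_bound}, the two sums differ only in their index sets: $\widehat{\FDP}_{\operatorname{e}}^{\operatorname{LORD}}(t)$ sums over all $j\in\{1,\ldots,t\}$, while $\FDP^{*}_{\operatorname{e}}(t)$ sums only over $\calH_0(t)\subseteq\{1,\ldots,t\}$. Since the testing levels $\alpha_j$ are non-negative and $R_{j-1}+1\geq 1$, every summand $\alpha_j/(R_{j-1}+1)$ is non-negative, so dropping the non-null indices can only decrease the sum. This gives the pathwise inequality
\begin{equation*}
    \FDP^{*}_{\operatorname{e}}(t)\;\leq\;\widehat{\FDP}_{\operatorname{e}}^{\operatorname{LORD}}(t).
\end{equation*}

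Next I would combine this with the hypothesis. The assumption $\widehat{\FDP}_{\operatorname{e}}^{\operatorname{LORD}}(t)\leq\alpha$ is a deterministic (pathwise) upper bound enforced by the construction of $\{\alpha_t\}$, so taking expectations preserves it and yields $\bbE[\FDP^{*}_{\operatorname{e}}(t)]\leq\bbE[\widehat{\FDP}_{\operatorname{e}}^{\operatorname{LORD}}(t)]\leq\alpha$. The $\calF_{t-1}$-measurability of $\alpha_t$ is needed to invoke \cref{thm:main}, but it is already part of the standing setup on predictable testing levels in \cref{sec:preliminaries}, so this just needs to be noted.

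Finally, applying \cref{thm:main} with the bound $\bbE[\FDP^{*}_{\operatorname{e}}(t)]\leq\alpha$ immediately yields $\FDR(t)\leq\alpha$ for all $t$, completing the proof. There is no real technical obstacle here; the argument is a one-line sandwich between the pathwise overestimate inequality and \cref{thm:main}. The only subtlety worth flagging explicitly is that non-negativity of the summands is what justifies dropping the alternative indices without reversing the inequality, and that the hypothesis is pathwise (hence trivially an expectation bound), so \cref{thm:main} applies directly without further probabilistic maneuvering.
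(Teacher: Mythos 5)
Your proposal is correct and matches the paper's own argument: the paper likewise notes that $\widehat{\FDP}_{\operatorname{e}}^{\operatorname{LORD}}(t)\geq\FDP_{\operatorname{e}}^{*}(t)$ holds by construction (the sum over all indices dominates the sum over the null indices since every summand is non-negative) and then invokes \cref{thm:main}. Your additional remarks on non-negativity of the summands and the pathwise nature of the hypothesis are just explicit statements of what the paper leaves implicit.
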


Note that $\widehat{\FDP}_{\operatorname{e}} ^{\operatorname{LORD}}(t)$ in \eqref{eq:simple_upper_bound} is constituted by the summation of terms associated with both $\alpha_t$ and $R_{t-1}$ at each time point and illustrates that the prior costs associated with each testing (investing) will affect the current test. The target FDR level $\alpha$ is considered as the limited budget ($\alpha$-wealth) of the entire testing procedure, and \cref{prop:simple_FDR_control} reveals that it cannot be increased once the testing begins. Unlike GAI's updating strategy, we cannot compensate for $\alpha$-wealth in the subsequent testing process based on the estimate of FDP in e-GAI, as the complex correlations make it difficult to measure the future loss of one false discovery effectively. In contrast, we adopt a {\it risk aversion} investing (RAI) strategy to update $\alpha_t$ as follows. 

Intuitively, one may update testing levels by allocating a prescribed proportion of the remaining budget to satisfy the condition in \cref{prop:simple_FDR_control}. Therefore, we dynamically allocate testing levels as $\alpha_1=\alpha\omega_1$ and for $t\geq 2$, 
\begin{equation}\label{eq:simple_threshold} 
    \alpha_t=\omega_t\left(\alpha-\sum_{j=1}^{t-1}\frac{\alpha_j}{R_{j-1}+1}\right)(R_{t-1}+1), 
\end{equation}
where $\omega_1,\ldots,\omega_t\in(0,1)$ control the proportion of the remaining $\alpha$-wealth allocated to the current testing. It's noted that a larger $\omega_t$ indicates that more $\alpha$-wealth is currently invested, which also implies a greater possibility of rejecting the current hypothesis. However, since rejections are not able to gain additional wealth and each test consumes a proportion $\omega_t$ of the remaining wealth, e-GAI views the entire testing process as a risky investment. When failing to reject the current hypothesis, one may consider increasing the investment proportion $\omega_t$ to encourage further testing. Upon hypothesis rejection (deciding to invest), this decision carries both the risk of a false discovery and induces a significant downward bias in the denominator of the FDP estimator at earlier time points. As each rejection introduces new risks akin to an investment, we prioritize updating $\omega_t$ from the RAI perspective as 
\begin{align}\label{eq:update_omega}
    \omega_{t+1} &= \omega_t + \omega_1\varphi^{t-R_t}(1-\delta_t)-\omega_1\psi^{R_t}\delta_t \\ \nonumber
    &=\omega_1+ \omega_1\left(\sum_{j=1}^{t-R_{t}}\varphi^j-\sum_{j=1}^{R_{t}}\psi^j\right) 
\end{align}
with convention $\sum_{j=1}^0\varphi^j=\sum_{j=1}^0\psi^j=0$, where $\omega_1\in(0,1)$ is a user-defined initial allocation coefficient, and $\varphi>0,~\psi>0$ are user-defined parameters that characterize the investment stimulation intensity post-acceptance and the risk regulation level post-rejection, respectively.

\begin{remark}

To ensure that $\omega_t \in (0,1)$ for each time $t$, we can select any $\omega_1 \in (0,0.5)$, $\varphi \in [0,0.5]$, and $\psi \in [0,0.5]$. In fact, the conditions can be relaxed to ensure that $\omega_t$ is $\mathcal{F}_{t-1}$-measurable and $\omega_t\in(0,1)$. We suggest choosing $\omega_1=O(1/T)$ with the total number of hypotheses $T$ to avoid spending too much wealth in the early stages while retaining sufficiently effective wealth for testing at each time point; more detailed discussions are provided in \cref{appendix:omega1}. Note that the choice of $\omega_t$ satisfying the above conditions does not affect the guarantee of the FDR control. This observation opens up greater flexibility, enhances the applicability of our algorithms, and enables users to leverage domain knowledge for dynamically adjusting the allocation. 
    
\end{remark}

\begin{algorithm}[tb]
   \caption{e-LORD}
   \label{alg:e-LORD}
\begin{algorithmic}[1]
   \STATE {\bfseries Input:} target FDR level $\alpha$, initial allocation coefficient $\omega_1\in(0,1)$, parameters $\varphi$ and $\psi\in(0,1)$, sequence of e-values $e_1,e_2,\ldots$. 
   \STATE Calculate $\alpha_1=\alpha\omega_1$ and decide $\delta_1=\mathbbm{1}\left\{e_1\geq\frac{1}{\alpha_1}\right\}$; 
   \STATE Update $R_1=\delta_1$ and $\omega_2$ by \eqref{eq:update_omega};
   \FOR{$t=2,3,\ldots$}
   \STATE Update testing level $\alpha_t$ by \eqref{eq:simple_threshold};
   \STATE Make decision $\delta_t=\mathbbm{1}\left\{e_t\geq\frac{1}{\alpha_t}\right\}$; 
   \STATE Update $R_t=R_{t-1}+\delta_t$ and $\omega_{t+1}$ by \eqref{eq:update_omega}; 
   \ENDFOR
   \STATE {\bfseries Output:} decision set $\{\delta_1,\delta_2,\ldots\}$. 
\end{algorithmic}
\end{algorithm}

The whole algorithm is referred to as e-LORD and summarized in \cref{alg:e-LORD}. 
In e-LORD, the testing levels $\{\alpha_t\}$ are updated not only by relying on both the number of previous rejections and the prior costs, but also by assigning less $\alpha$-wealth for each rejection using the RAI strategy.

We find that the e-LOND algorithm proposed by \citet{xu2024online} can be converted into the e-LORD algorithm. Let $\alpha_t^{\operatorname{e-LOND}}$ and $R_t^{\operatorname{e-LOND}}$ denote the testing level and the number of rejections at time $t$ in the e-LOND algorithm, respectively. \citet{xu2024online} assigned $\alpha_t^{\operatorname{e-LOND}}=\alpha\gamma_t\left(R^{\operatorname{e-LOND}}_{t-1}+1\right)$, where $\{\gamma_t\}$ is pre-specified non-negative sequence summing to one. It can be verified that $\widehat{\FDP}^{\operatorname{LORD}} _{\operatorname{e}}(t)$ in \eqref{eq:simple_upper_bound} satisfies 
\begin{equation*}
    \widehat{\FDP}^{\operatorname{LORD}} _{\operatorname{e}}(t)=\sum_{j=1}^{t}\frac{\alpha_j^{\operatorname{e-LOND}}}{R_{j-1}^{\operatorname{e-LOND}}+1}=\alpha\sum_{j=1}^{t}\gamma_t=\alpha. 
\end{equation*}
Furthermore, if choosing $\gamma_t=\omega_t\prod_{j=1}^{t-1}\left(1-\omega_{j}\right)$ in e-LOND with $\{\omega_t\}$ in e-LORD, then we have $\alpha^{\operatorname{e-LOND}}_t$ equals $\alpha_t$ in \eqref{eq:simple_threshold} for any $t$; refer to \eqref{eq:another_eLORD} in \cref{sec:another_form} for more details. In this case, at each time $t$, the rejection set of e-LOND will be identical to the result of \cref{alg:e-LORD}. Thus, we can consider e-LOND as operating on a special type of e-LORD by designing $\omega_t$ from a given sequence $\{\gamma_t\}$.

\subsection{e-SAFFRON}\label{sec:e-SAFFRON}

To approximate $\FDP^{*}_{\operatorname{e}}(t)$, the estimate $\widehat{\FDP}_{\operatorname{e}}^{\operatorname{LORD}}(t)$ is calculated by summing all non-negative terms over time. Thus, it serves as a crude and conservative overestimate if the proportion of alternatives is non-negligible. Inspired by Storey-BH \citep{storey2002direct} and SAFFRON \citep{ramdas2018saffron}, we further propose an adaptive estimate defined as
\begin{equation*}\label{eq:adaptive_upper_bound}
    \widehat{\FDP}_{\operatorname{e}} ^{\operatorname{SAFFRON}}(t):=\sum_{j=1}^{t}\frac{\alpha_j}{R_{j-1}+1}\frac{\mathbbm{1}\left\{e_j< \frac{1}{\lambda_j}\right\}}{1-\lambda_j}, 
\end{equation*}
where $\{\lambda_t\}_{t=1}^{\infty}$ is a predictable sequence of user-chosen parameters in the interval $(0,1)$. Here the term \textit{adaptive} means that it is based on
an estimate of the proportion of true nulls as in \citet{storey2002direct,ramdas2018saffron}. 

In contrast to $\widehat{\FDP}_{\operatorname{e}} ^{\operatorname{LORD}}(t)$, the summation in $\widehat{\FDP}_{\operatorname{e}} ^{\operatorname{SAFFRON}}(t)$ includes only those test levels associated with relatively small e-values. Although $\widehat{\FDP}_{\operatorname{e}} ^{\operatorname{SAFFRON}}(t)$ is not necessarily always larger than $\FDP^{*}_{\operatorname{e}}(t)$, we can verify that $\bbE\left[\widehat{\FDP}_{\operatorname{e}} ^{\operatorname{SAFFRON}}(t)\right]\geq\bbE\left[\FDP^{*}_{\operatorname{e}}(t)\right]$, which is sufficient for FDR control according to \cref{thm:main}. The properties of the adaptive estimate are formalized below. 

\begin{proposition}\label{prop:adaptive_FDR_control}
    Given a predictable sequence $\{\lambda_t\}_{t=1}^{\infty}$, if online e-values are valid in \eqref{eq:property_of_online_eval}, then for $\alpha_t\in\mathcal{F}_{t-1}$ satisfying $\widehat{\FDP}_{\operatorname{e}} ^{\operatorname{SAFFRON}}(t)\leq\alpha$, we have: 
    \begin{itemize}
        \item[(a)] $\bbE\left[\widehat{\FDP}_{\operatorname{e}} ^{\operatorname{SAFFRON}}(t)\right]\geq\bbE\left[\FDP^{*}_{\operatorname{e}}(t)\right]$ and
        \item[(b)] $\FDR(t)\leq \alpha$ for all $t$. 
    \end{itemize}
\end{proposition}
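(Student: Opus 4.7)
The plan is to establish part (a) via a conditional Markov argument, after which part (b) follows immediately by combining (a) with \cref{thm:main}.

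For (a), I would first split the sum defining $\widehat{\FDP}^{\operatorname{SAFFRON}}_{\operatorname{e}}(t)$ into contributions from the null indices $j\in\calH_0(t)$ and from the alternative indices. Each summand is non-negative (since $\alpha_j\ge 0$, $R_{j-1}+1\ge 1$, and $\lambda_j\in(0,1)$), so discarding the alternative terms can only decrease the total. It therefore suffices to show, for every null $j$,
$$\bbE\left[\frac{\alpha_j}{R_{j-1}+1}\cdot\frac{\mathbbm{1}\{e_j<1/\lambda_j\}}{1-\lambda_j}\right]\ge \bbE\left[\frac{\alpha_j}{R_{j-1}+1}\right].$$
To prove this per-term inequality, I would condition on $\calF_{j-1}$. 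Since $\alpha_j$, $\lambda_j$, and $R_{j-1}$ are all predictable, i.e., $\calF_{j-1}$-measurable, they pull out of the conditional expectation, and the claim reduces to
$$\bbE\big[\mathbbm{1}\{e_j<1/\lambda_j\}\mid\calF_{j-1}\big]\ge 1-\lambda_j.$$
This is precisely a conditional Markov inequality applied to the non-negative random variable $e_j$: under the null, $\bbE[e_j\mid\calF_{j-1}]\le 1$ gives $\bbP(e_j\ge 1/\lambda_j\mid\calF_{j-1})\le \lambda_j$, and taking complements yields the required bound. Summing over $j\in\calH_0(t)$ and taking unconditional expectations delivers (a).

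For (b), the hypothesis $\widehat{\FDP}^{\operatorname{SAFFRON}}_{\operatorname{e}}(t)\le\alpha$ (almost surely) implies $\bbE[\widehat{\FDP}^{\operatorname{SAFFRON}}_{\operatorname{e}}(t)]\le\alpha$, which combined with (a) gives $\bbE[\FDP^{*}_{\operatorname{e}}(t)]\le\alpha$. Invoking \cref{thm:main} then concludes $\FDR(t)\le\alpha$ for all $t$. There is no real conceptual obstacle; the only point requiring care is the measurability bookkeeping that lets us factor $\alpha_j$, $\lambda_j$, and $R_{j-1}$ out of the conditional expectation, which is guaranteed by the predictability conventions imposed in \cref{sec:preliminaries} on $\alpha_j$ and in \cref{sec:e-SAFFRON} on the $\lambda_j$ sequence. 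Conceptually, the argument is a Storey-type adaptive adjustment transported to the e-value setting, with the conditional Markov inequality playing the role that conditional super-uniformity plays in the p-value analysis of SAFFRON by \citet{ramdas2018saffron}.
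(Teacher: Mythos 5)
Your proposal is correct and follows essentially the same route as the paper's proof: discard the non-null summands by non-negativity, condition on $\calF_{j-1}$ to pull out the predictable factors, and apply the conditional Markov inequality $\bbP(e_j\geq 1/\lambda_j\mid\calF_{j-1})\leq\lambda_j$ to obtain $\bbE[\mathbbm{1}\{e_j<1/\lambda_j\}\mid\calF_{j-1}]\geq 1-\lambda_j$, after which (b) follows from \cref{thm:main}. You simply make explicit the Markov step that the paper summarizes as ``the property of e-values.''
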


In the following, we consider $\lambda_t\equiv\lambda\in(0,1)$ for simplicity. Embracing the RAI principle, we propose an adaptive algorithm, called e-SAFFRON. The e-SAFFRON allocates testing levels as $\alpha_1=\alpha(1-\lambda)\omega_1$ and for $t\geq 2$, 
\begin{equation}\label{eq:adaptive_threshold}
    \alpha_t=\omega_t\left(\alpha(1-\lambda)-\sum_{j=1}^{t-1}\frac{\alpha_j\mathbbm{1}\left\{e_j<\frac{1}{\lambda}\right\}}{R_{j-1}+1}\right)(R_{t-1}+1),
\end{equation}
where $\omega_t\in (0,1)$ is updated by \eqref{eq:update_omega}.

\begin{algorithm}[tb]
   \caption{e-SAFFRON}
   \label{alg:e-SAFFRON}
\begin{algorithmic}[1]
   \STATE {\bfseries Input:} target FDR level $\alpha$, initial allocation coefficient $\omega_1\in(0,1)$, parameters $\lambda, \varphi$ and $\psi\in(0,1)$, sequence of e-values $e_1,e_2,\ldots$. 
   \STATE Calculate $\alpha_1=\alpha(1-\lambda)\omega_1$ and decide $\delta_1=\mathbbm{1}\left\{e_1\geq\frac{1}{\alpha_1}\right\}$; 
   \STATE Update $R_1=\delta_1$ and $\omega_2$ by \eqref{eq:update_omega}; 
   \FOR{$t=2,3,\ldots$}
   \STATE Update testing level $\alpha_t$ by \eqref{eq:adaptive_threshold}; 
   \STATE Make decision $\delta_t=\mathbbm{1}\left\{e_t\geq\frac{1}{\alpha_t}\right\}$; 
    \STATE Update $R_t=R_{t-1}+\delta_t$ and $\omega_{t+1}$ by  \eqref{eq:update_omega}; 
   \ENDFOR
   \STATE {\bfseries Output:} decision set $\{\delta_1,\delta_2,\ldots\}$. 
\end{algorithmic}
\end{algorithm}

We summarize e-SAFFRON in \cref{alg:e-SAFFRON}. In particular, setting $\lambda=0$ in \cref{alg:e-SAFFRON} simplifies it to \cref{alg:e-LORD}, demonstrating that e-SAFFRON serves as the adaptive counterpart to e-LORD, similar to the relationship between SAFFRON and LORD++ within the GAI framework. 

\begin{remark}
    Note that the choice of $\lambda$ will affect the total ``wealth'' $\alpha(1-\lambda)$, which will be no further increased in the subsequent period of e-SAFFRON. We prefer a relatively small value $\lambda$ to preserve wealth, with $\lambda=0.1$ as the default choice in our numerical experiments, which differs from the value recommended in the SAFFRON procedure \citep{ramdas2018saffron}. 
    The latter, SAFFRON with independent p-values, allows for additional rewards when a hypothesis is rejected and suggests $\lambda=0.5$. 
\end{remark}

\section{Further Discussions on e-GAI}

In this section, we further investigate the properties of the e-LORD and e-SAFFRON algorithms within the e-GAI framework. 

\subsection{Long-Term Performance}\label{sec:mem-FDR}

We provide strategies for the long-term performance of our methods to address the issue of $\alpha$-death, halting rejections once $\alpha$-wealth tends to zero, a common phenomenon within the GAI framework \citep{ramdas2017online}. 

\paragraph{$\alpha$-death.}
In a long-term testing process, there may be extended periods during which no hypotheses are rejected, particularly when the true alternatives are rare, leading to a continuous accumulation of the allocation proportion $\omega_t$. As a result, testing levels may become severely diminished in the later stages, making it difficult to achieve any further rejections. This phenomenon is referred to as $\alpha$-death, which induces a loss of power and ultimately compromises the long-term efficacy of our online testing algorithm.

\paragraph{mem-FDR control.}
To alleviate $\alpha$-death over a long period (potentially infinite), \citet{ramdas2017online} defined decaying memory FDR (mem-FDR) to allow more attention to recent rejections by introducing a user-defined decay parameter $d\in (0,1]$ and proposed mem-LORD++ that controls mem-FDR under independence. Specifically, mem-FDR is defined as
\begin{equation*}
    \operatorname{mem-FDR}(t):=\mathbb{E}\left[\frac{\sum_{j\in\mathcal{H}_0(t)}d^{t-j}\delta_j}{\sum_{j=1}^{t}d^{t-j}\delta_j}\right]. 
\end{equation*}

To address the issue, we adapt e-GAI and design mem-e-GAI to control mem-FDR in our setting. The technique used here is similar to the e-GAI framework to control FDR in \cref{sec:our_framework}. Denote $R_t^{\operatorname{d}}=\sum_{j=1}^{t}d^{t-j}\delta_j$ for simplicity. 

\begin{theorem}\label{thm:mem-FDR_control}
   Suppose the online e-values are valid in \eqref{eq:property_of_online_eval}. Let the oracle e-value-based estimate of mem-FDP be 
    \begin{equation}\label{eq:oracle_e_mem-FDP_estimate}
        \operatorname{mem-FDP}^*(t):=\sum_{j\in\mathcal{H}_0(t)}\frac{\alpha_j}{dR_{j-1}^{\operatorname{d}}+1}. 
    \end{equation}
    If $\bbE\left[\operatorname{mem-FDP}^*(t)\right]\leq \alpha$, then $\operatorname{mem-FDR}(t)\leq\alpha$ for all $t$. 
\end{theorem}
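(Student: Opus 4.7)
The plan is to mirror the proof of \cref{thm:main}, replacing the plain recursion $R_j = R_{j-1} + \delta_j$ by the discounted recursion $R_j^{\operatorname{d}} = dR_{j-1}^{\operatorname{d}} + \delta_j$, and then running the same e-value Markov argument. The goal is to sandwich $\operatorname{mem-FDP}(t)$ between $\operatorname{mem-FDP}^*(t)$ in expectation, term by term, by matching the discount factor $d^{t-j}$ in the numerator of $\operatorname{mem-FDP}(t)$ against the denominator $dR_{j-1}^{\operatorname{d}}+1$ in $\operatorname{mem-FDP}^*(t)$.

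First, I would establish the pathwise inequality
\[
\operatorname{mem-FDP}(t) \;=\; \sum_{j\in\mathcal{H}_0(t)} \frac{d^{t-j}\delta_j}{R_t^{\operatorname{d}}} \;\leq\; \sum_{j\in\mathcal{H}_0(t)} \frac{\delta_j}{dR_{j-1}^{\operatorname{d}}+1},
\]
with the convention $0/0=0$. For any $j$ with $\delta_j=1$, the recursion gives $R_j^{\operatorname{d}} = dR_{j-1}^{\operatorname{d}}+1$, and unrolling further yields $R_t^{\operatorname{d}} = d^{t-j}R_j^{\operatorname{d}} + \sum_{k=j+1}^t d^{t-k}\delta_k \geq d^{t-j}(dR_{j-1}^{\operatorname{d}}+1)$. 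Dividing gives $d^{t-j}/R_t^{\operatorname{d}} \leq 1/(dR_{j-1}^{\operatorname{d}}+1)$, and when $\delta_j=0$ both sides are zero, so the termwise bound holds.

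Next, I would take expectations and handle each summand via the conditional filtration argument. Since $\alpha_j$ and $R_{j-1}^{\operatorname{d}}$ are both $\mathcal{F}_{j-1}$-measurable, Markov's inequality combined with the conditional validity \eqref{eq:property_of_online_eval} gives, for $j\in\mathcal{H}_0(t)$,
\[
\mathbb{E}\!\left[\frac{\delta_j}{dR_{j-1}^{\operatorname{d}}+1}\,\bigg|\,\mathcal{F}_{j-1}\right] \;=\; \frac{\mathbb{P}(e_j\geq 1/\alpha_j\mid\mathcal{F}_{j-1})}{dR_{j-1}^{\operatorname{d}}+1} \;\leq\; \frac{\alpha_j\,\mathbb{E}[e_j\mid\mathcal{F}_{j-1}]}{dR_{j-1}^{\operatorname{d}}+1} \;\leq\; \frac{\alpha_j}{dR_{j-1}^{\operatorname{d}}+1}.
\]
Summing over $j\in\mathcal{H}_0(t)$, chaining with the pathwise bound, and applying the hypothesis $\mathbb{E}[\operatorname{mem-FDP}^*(t)]\leq\alpha$ yields $\operatorname{mem-FDR}(t)\leq\alpha$.

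The main obstacle, and the only step that genuinely requires thought beyond the blueprint of \cref{thm:main}, is identifying the correct ``predictable lower bound'' for $R_t^{\operatorname{d}}$ at the rejection time $j$: one must observe that $R_t^{\operatorname{d}} \geq d^{t-j}R_j^{\operatorname{d}}$ because all later discount factors are non-negative, and then recognize $d^{t-j}R_j^{\operatorname{d}} = d^{t-j}(dR_{j-1}^{\operatorname{d}}+1)$ precisely cancels the $d^{t-j}$ weight in the numerator of $\operatorname{mem-FDP}(t)$. This is exactly the analogue of the inequality $R_t\vee 1 \geq R_{j-1}+1$ used in \cref{thm:main}, and it motivates why the denominator in the oracle estimate \eqref{eq:oracle_e_mem-FDP_estimate} is taken to be $dR_{j-1}^{\operatorname{d}}+1$ rather than some other $\mathcal{F}_{j-1}$-measurable quantity. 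Once this matching is in place, the remainder of the argument is identical to the FDR proof.
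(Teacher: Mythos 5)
Your proposal is correct and follows essentially the same route as the paper's proof: the key step in both is the pathwise bound $R_t^{\operatorname{d}}\geq d^{t-j}\bigl(dR_{j-1}^{\operatorname{d}}+1\bigr)$ at each rejection time $j$, which cancels the $d^{t-j}$ weight in the numerator, followed by the standard conditional Markov argument using \eqref{eq:property_of_online_eval}. Your phrasing of the second step as a conditional Markov inequality is just a repackaging of the paper's pathwise inequality $\mathbbm{1}\{y\geq 1\}\leq y$ combined with iterated expectations.
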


\cref{thm:mem-FDR_control} provides an oracle estimate of mem-FDP, and offers insights and guidance for designing algorithms that control mem-FDR. Note that \eqref{eq:oracle_e_mem-FDP_estimate} is facilitated by an understanding of the unknown denominator $R_t^{\operatorname{d}}$ of the true mem-FDP. A natural choice that can serve as a $(j-1)$-measurable lower bound for \textit{predicting} $R_t^{\operatorname{d}}$ is $d^{t-j}\left(dR_{j-1}^{\operatorname{d}}+1\right)$ since this predicted value $d^{t-j}\left(dR_{j-1}^{\operatorname{d}}+1\right)\leq \left(R_t^{\operatorname{d}} \vee 1\right)$ holds for each $j$ with $\delta_j=1$. 

\paragraph{mem-e-LORD \& mem-e-SAFFRON.}
Adopting the core idea of the e-GAI framework, we can construct upper bounds for $\operatorname{mem-FDP}^*(t)$ 
in \eqref{eq:oracle_e_mem-FDP_estimate} and design the testing levels accordingly to achieve mem-FDR control.

One natural overestimate of $\operatorname{mem-FDP}^*(t)$ is 
\begin{equation}\label{eq:simple_upper_bound_mem}
    \operatorname{mem-} \widehat{\operatorname{FDP}}^{\operatorname{LORD}}(t):=\sum_{j=1}^t\frac{\alpha_j}{dR_{j-1}^{\operatorname{d}}+1}
\end{equation}
and any algorithm is referred to as mem-e-LORD that allocates testing levels $\{\alpha_t\}$ satisfying $\operatorname{mem-} \widehat{\operatorname{FDP}}^{\operatorname{LORD}}(t)\leq \alpha$. As an example, adopting the RAI strategy, mem-e-LORD allocates testing levels as $\alpha_1=\alpha\omega_1$ and for $t\geq 2$, 
\begin{equation}\label{eq:simple_threshold_mem}
    \alpha_t=\omega_t\left(\alpha-\sum_{j=1}^{t-1}\frac{\alpha_j}{dR_{j-1}^{\operatorname{d}}+1}\right)\left(dR_{t-1}^{\operatorname{d}}+1\right),
\end{equation}
where $\omega_t\in(0,1)$ is updated by \eqref{eq:update_omega}. 

When the proportion of alternatives is non-negligible, it is preferable to employ an adaptive overestimate defined as
\begin{equation*}
    \operatorname{mem-} \widehat{\operatorname{FDP}}^{\operatorname{SAFFRON}}(t):=\sum_{j=1}^t\frac{\alpha_j}{dR_{j-1}^{\operatorname{d}}+1}\frac{\mathbbm{1}\left\{e_j<\frac{1}{\lambda_j}\right\}}{1-\lambda_j}, 
\end{equation*}
where $\{\lambda_t\}_{t=1}^{\infty}$ satisfying $\lambda_t\in(0,1)$ is a predictable sequence of user-chosen. We refer to an algorithm as mem-e-SAFFRON that allocates testing levels $\{\alpha_t\}$ satisfying $\operatorname{mem-} \widehat{\operatorname{FDP}}^{\operatorname{SAFFRON}}(t)\leq \alpha$. For simplicity, we consider $\lambda_t\equiv \lambda\in(0,1)$ and employ mem-e-SAFFRON from RAI perspective, allocating testing levels as $\alpha_1=\alpha(1-\lambda)\omega_1$ and for $t\geq 2$,
\begin{equation*}
    \alpha_t=\omega_t\left(\alpha(1-\lambda)-\sum_{j=1}^{t-1}\frac{\alpha_j\mathbbm{1}\left\{e_j<\frac{1}{\lambda}\right\}}{dR_{j-1}^{\operatorname{d}}+1}\right)\left(dR_{t-1}^{\operatorname{d}}+1\right),
\end{equation*}
and updating $\omega_t\in(0,1)$ as in \eqref{eq:update_omega}.

According to \cref{thm:mem-FDR_control}, both mem-e-LORD and mem-e-SAFFRON achieve mem-FDR control. 

\begin{proposition}\label{prop:mem_e_LORD_SAFFRON}
    Suppose online e-values are valid in \eqref{eq:property_of_online_eval}. 
    \begin{itemize}
        \item[(a)] For $\alpha_t\in\mathcal{F}_{t-1}$ satisfying $\operatorname{mem-}\widehat{\operatorname{FDP}}^{\operatorname{LORD}}(t)\leq\alpha$, we have $\operatorname{mem-FDR}(t)\leq\alpha$ for all $t$. 
        \item[(b)] Given a predictable sequence $\{\lambda_t\}_{t=1}^{\infty}$, for $\alpha_t\in\mathcal{F}_{t-1}$ satisfying $\operatorname{mem-}\widehat{\FDP}^{\operatorname{SAFFRON}}(t)\leq\alpha$, we have $\bbE\left[\operatorname{mem-} \widehat{\operatorname{FDP}}^{\operatorname{SAFFRON}}(t)\right]\geq\bbE\left[\operatorname{mem-FDP}^{*}(t)\right]$ and $\operatorname{mem-FDR}(t)\leq \alpha$ for all $t$. 
    \end{itemize}
\end{proposition}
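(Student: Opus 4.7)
The plan is to reduce both parts to \cref{thm:mem-FDR_control}, which only requires showing that $\mathbb{E}[\operatorname{mem-FDP}^{*}(t)] \leq \alpha$. The argument mirrors the reasoning used for \cref{prop:simple_FDR_control} and \cref{prop:adaptive_FDR_control}, with $R_{j-1}+1$ replaced by $dR_{j-1}^{\operatorname{d}}+1$ throughout.

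For part (a), the bound is pathwise. Since every summand $\alpha_j/(dR_{j-1}^{\operatorname{d}}+1)$ is non-negative, and $\calH_0(t) \subseteq \{1,\ldots,t\}$, we have
\begin{equation*}
\operatorname{mem-FDP}^{*}(t) = \sum_{j \in \calH_0(t)} \frac{\alpha_j}{dR_{j-1}^{\operatorname{d}}+1} \;\leq\; \sum_{j=1}^{t} \frac{\alpha_j}{dR_{j-1}^{\operatorname{d}}+1} = \operatorname{mem-}\widehat{\operatorname{FDP}}^{\operatorname{LORD}}(t) \;\leq\; \alpha
\end{equation*}
almost surely. Taking expectations and invoking \cref{thm:mem-FDR_control} yields $\operatorname{mem-FDR}(t) \leq \alpha$.

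For part (b), the bound is no longer pathwise; I would compare the two quantities in expectation term by term. Split $\operatorname{mem-}\widehat{\operatorname{FDP}}^{\operatorname{SAFFRON}}(t)$ into a sum over $j \in \calH_0(t)$ plus a non-negative remainder over non-null indices. For each null $j \in \calH_0(t)$, the factors $\alpha_j$, $R_{j-1}^{\operatorname{d}}$, and $\lambda_j$ are $\calF_{j-1}$-measurable by predictability, so conditioning on $\calF_{j-1}$ and applying Markov's inequality to the e-value validity \eqref{eq:property_of_online_eval} gives $\bbP(e_j \geq 1/\lambda_j \mid \calF_{j-1}) \leq \lambda_j \,\bbE[e_j \mid \calF_{j-1}] \leq \lambda_j$, so
\begin{equation*}
\bbE\!\left[\frac{\mathbbm{1}\{e_j < 1/\lambda_j\}}{1-\lambda_j}\,\bigg|\, \calF_{j-1}\right] \;\geq\; 1.
\end{equation*}
Multiplying by the $\calF_{j-1}$-measurable quantity $\alpha_j/(dR_{j-1}^{\operatorname{d}}+1)$, taking expectations, and summing over $j \in \calH_0(t)$ shows the null contribution alone already dominates $\bbE[\operatorname{mem-FDP}^{*}(t)]$; the non-null contribution is non-negative, so $\bbE[\operatorname{mem-}\widehat{\operatorname{FDP}}^{\operatorname{SAFFRON}}(t)] \geq \bbE[\operatorname{mem-FDP}^{*}(t)]$. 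Combining this with the hypothesis $\operatorname{mem-}\widehat{\operatorname{FDP}}^{\operatorname{SAFFRON}}(t) \leq \alpha$ and applying \cref{thm:mem-FDR_control} delivers $\operatorname{mem-FDR}(t) \leq \alpha$.

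The only subtle point is part (b): one must verify that the predictability of $\lambda_j$ and $\alpha_j$ justifies pulling them out of the conditional expectation, and that the conditional validity \eqref{eq:property_of_online_eval} is precisely the right hypothesis to drive Markov's inequality. Everything else is bookkeeping and a direct analogue of the proofs of \cref{prop:simple_FDR_control} and \cref{prop:adaptive_FDR_control}, just with the discounted rejection count $R_{j-1}^{\operatorname{d}}$ in place of $R_{j-1}$.
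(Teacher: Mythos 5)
Your proposal is correct and follows essentially the same route as the paper: part (a) via the pathwise domination $\operatorname{mem-FDP}^{*}(t)\leq\operatorname{mem-}\widehat{\operatorname{FDP}}^{\operatorname{LORD}}(t)$, and part (b) via conditioning on $\calF_{j-1}$ and the bound $\bbE\left[\mathbbm{1}\{e_j<1/\lambda_j\}\mid\calF_{j-1}\right]\geq 1-\lambda_j$, then invoking \cref{thm:mem-FDR_control}. The only difference is cosmetic: you spell out the Markov's-inequality derivation of that conditional bound, which the paper leaves implicit.
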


\subsection{Extension to p-values}\label{sec:p_LS_RAI}

While the e-GAI framework is initially developed based on the study of e-values, we demonstrate that it can also be generalized to p-values satisfying conditionally super-uniformity in \eqref{eq:property_of_online_pval}, enriching the proposed framework and making the theory more comprehensive and complete. 

Suppose we observe valid p-values $p_1,\ldots,p_t,\ldots$ and make decision $\delta_t=\mathbbm{1}\{p_t\leq\alpha_t\}$ at each time $t$. When using p-values for online testing, we demonstrate that $\FDP^{*}_{\operatorname{e}}(t)$ in \eqref{eq:oracle_e_FDP_estimate} can still serve as an oracle estimate of FDP, in which the denominator involves the number of rejections $R_t$ relevant to p-values. By controlling this estimator to be bounded, we can achieve FDR control for p-values satisfying the conditionally super-uniform property in \eqref{eq:property_of_online_pval}.

\begin{theorem}\label{thm:p-value_control}
    Suppose the online p-values are conditionally super-uniform in \eqref{eq:property_of_online_pval}. If $\FDP^{*}_{\operatorname{e}}(t)$ in \eqref{eq:oracle_e_FDP_estimate} satisfies $\bbE\left[\FDP^{*}_{\operatorname{e}}(t)\right]\leq \alpha$, then $\FDR(t)\leq\alpha$ for all $t$. 
\end{theorem}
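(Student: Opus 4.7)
The plan is to mirror the structure used for \cref{thm:main}, replacing the e-value conditional mean bound with the conditional super-uniformity of p-values. The goal is to show $\FDR(t)\leq \bbE[\FDP^*_{\operatorname{e}}(t)]\leq \alpha$, so the work reduces to proving the first inequality termwise over the null indices.

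First I would use the same ``predicted denominator'' trick that motivates \eqref{eq:oracle_e_FDP_estimate}. For any $j\in\calH_0(t)$ with $\delta_j=1$, the monotonicity $R_{j-1}+1\leq R_t$ holds, while for $\delta_j=0$ the contribution to the numerator of $\FDP(t)$ vanishes. Therefore
\begin{equation*}
    \FDP(t)=\frac{\sum_{j\in\calH_0(t)}\delta_j}{R_t\vee 1}\leq \sum_{j\in\calH_0(t)}\frac{\delta_j}{R_{j-1}+1}.
\end{equation*}
Taking expectations and interchanging sum and expectation (all summands are nonnegative) reduces the claim to bounding each $\bbE[\delta_j/(R_{j-1}+1)]$ for $j\in\calH_0(t)$.

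Next, because $\alpha_j$ and $R_{j-1}$ are both $\calF_{j-1}$-measurable, tower and the conditional super-uniformity \eqref{eq:property_of_online_pval} give, for each null $j$,
\begin{equation*}
    \bbE\!\left[\frac{\delta_j}{R_{j-1}+1}\right]=\bbE\!\left[\frac{\bbP(p_j\leq \alpha_j\mid\calF_{j-1})}{R_{j-1}+1}\right]\leq \bbE\!\left[\frac{\alpha_j}{R_{j-1}+1}\right].
\end{equation*}
Summing over $j\in\calH_0(t)$ then yields $\FDR(t)\leq \bbE[\FDP^{*}_{\operatorname{e}}(t)]\leq \alpha$, which closes the argument.

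There is no genuine obstacle here: the proof is essentially a transcription of the e-value argument with \eqref{eq:property_of_online_pval} playing the role that \eqref{eq:property_of_online_eval} plays in \cref{thm:main}. The only subtle point worth stating carefully in writing is that $R_{j-1}+1$ serves simultaneously as an $\calF_{j-1}$-measurable lower bound for $R_t\vee 1$ on $\{\delta_j=1\}$ and as the predictable denominator that lets conditional super-uniformity be applied cleanly; making this dual role explicit is what distinguishes the proof from a naive ``pull $\alpha_j$ out of the expectation'' calculation.
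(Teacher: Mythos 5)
Your proposal is correct and follows essentially the same route as the paper's proof: the termwise bound $R_{j-1}+1\leq R_t\vee 1$ on $\{\delta_j=1\}$, followed by the tower property over $\calF_{j-1}$ and conditional super-uniformity applied to the predictable level $\alpha_j$, is exactly the paper's chain of inequalities (i)--(iii). No gaps.
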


\cref{thm:p-value_control} provides a general theoretical result for FDR control with conditionally super-uniform p-values. Building upon the analogous strategy outlined in \cref{sec:our_framework} and \cref{sec:mem-FDR}, the e-GAI framework can be naturally extended to p-value-compatible algorithms and corresponding versions for the long-term performance. For precise algorithmic differentiation, we denote the p-value-adapted variants of e-LORD and e-SAFFRON as pL-RAI and pS-RAI, respectively, emphasizing the adoption of p-values as the test statistics and the dynamic updating mechanism of testing levels $\alpha_t$ through the RAI strategy. We provide more detailed discussions and technical explanations in \cref{appendix:p-RAI}.

\section{Numerical Experiments}\label{sec:experiments}

In this section, we evaluate the performance of our online testing framework on both synthetic and real data. We compare e-LORD, e-SAFFRON, pL-RAI, and pS-RAI with e-LOND, LORD++, SAFFRON, and SupLORD in terms of FDR and power. We validate the performance of mem-e-LORD and mem-e-SAFFRON through simulated numerical experiments in \cref{appendix:simulation_mem}. The code for all numerical experiments in this paper is available at \url{https://github.com/zijianwei01/e-GAI}.

\subsection{Simulation: Testing with Gaussian Observations}\label{sec:simulation}

We use an experimental setup that tests the mean of a Gaussian distribution with the total number of data $T=500$. The null hypothesis takes $\bbH_{t}: \mu_t=0$ for each time $t\in[T]$. The true labels $\theta_t$ are generated from  $\operatorname{Bernoulli}(\pi_1)$. The Gaussian variates $(X_1,\ldots,X_T)^{\top}$ is from $\calN(\bm{\mu},\bm{\Sigma})$ with mean vector $\bm{\mu}\in\mathbb{R}^{T}$ and covariance matrix $\bm{\Sigma}\in\mathbb{R}^{T\times T}$. The elements in $\bm{\mu}=(\mu_1,\ldots,\mu_T)^{\top}$ satisfy $\mu_t=0$ if $\theta_t=0$ and $\mu_t=\mu_{\operatorname{c}}>0$ if $\theta_t=1$, where $\mu_{\operatorname{c}}$ is the signal parameter. Additionally, the signals of the true alternatives are correlated with the correlation coefficient $\rho$. The covariance matrix satisfies $\Sigma\succ0$, and $\Sigma_{ij}=\rho^{\lvert i-j\rvert}\cdot\mathbbm{1}\left\{\lvert i-j \rvert\leq L\right\}$. Note that the data at different time points will influence each other, which is in line with real-life online scenarios.

Under the normality assumption at each time point $t$, we compute the e-value as the corresponding likelihood-ratio statistic and the p-value by evaluating the conditional distribution. We take $\omega_1=0.005,\varphi=\psi=0.5$ in e-LORD and pL-RAI, and additionally $\lambda=0.1$ in e-SAFFRON and pS-RAI, while we use default parameters from the R package \texttt{onlineFDR} \citep{david2022onlinefdr} for other benchmarks. The target FDR level is set as $\alpha=0.05$.


\begin{figure}[t]
    \vskip 0.1in
    \centering
        \subfigure[e-value-based methods]{\label{fig:all_mod_e}
    \includegraphics[width=\columnwidth]{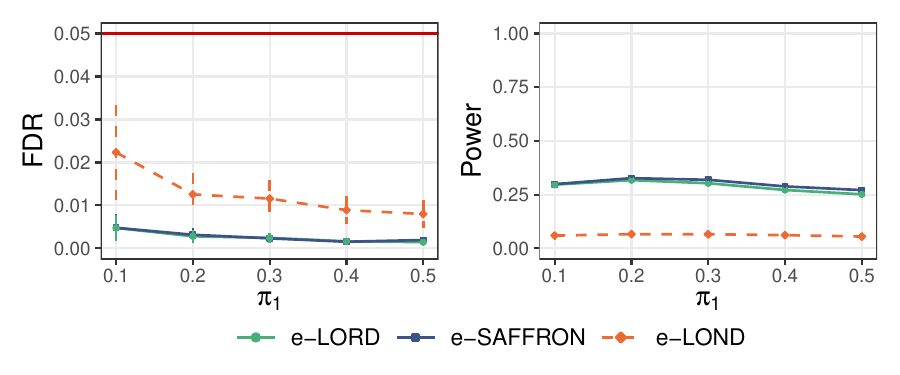}}
        \subfigure[p-value-based methods]{\label{fig:all_mod_p}
    \includegraphics[width=\columnwidth]{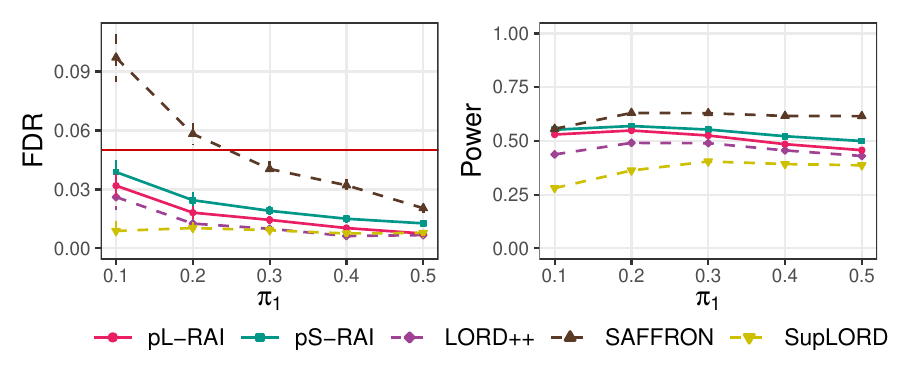}}
    \caption{Empirical FDR and power with standard error versus proportion of alternative hypotheses $\pi_1$ for various online methods, with $\rho=0.5$, $L=30$ and $\mu_{\operatorname{c}}=3$.}
    \label{fig:main_rho09}
    \vskip -0.1in
\end{figure}

\cref{fig:main_rho09} show the results of various methods with $\rho=0.5$ and $\mu_{\operatorname{c}}=3$. The empirical results show that SAFFRON will inflate the FDR heavily, while others realize the FDR control. However, there is no theoretical guarantee for controlling the FDR with complex dependent data in LORD++, which makes the rejection decisions not as safe as they seem. Owing to dynamically updating the testing levels, both e-LORD and e-SAFFRON lead to much higher power than e-LOND, and pL-RAI and pS-RAI lead to higher power than LORD++ and SupLORD. Similar performance can be found for other settings, as presented in \cref{appendix:simulation}.

\subsection{Real Data: NYC Taxi Anomaly Detection}

\begin{figure*}[t]
\vskip 0.1in
\begin{center}
\centerline{\includegraphics[width=\textwidth]{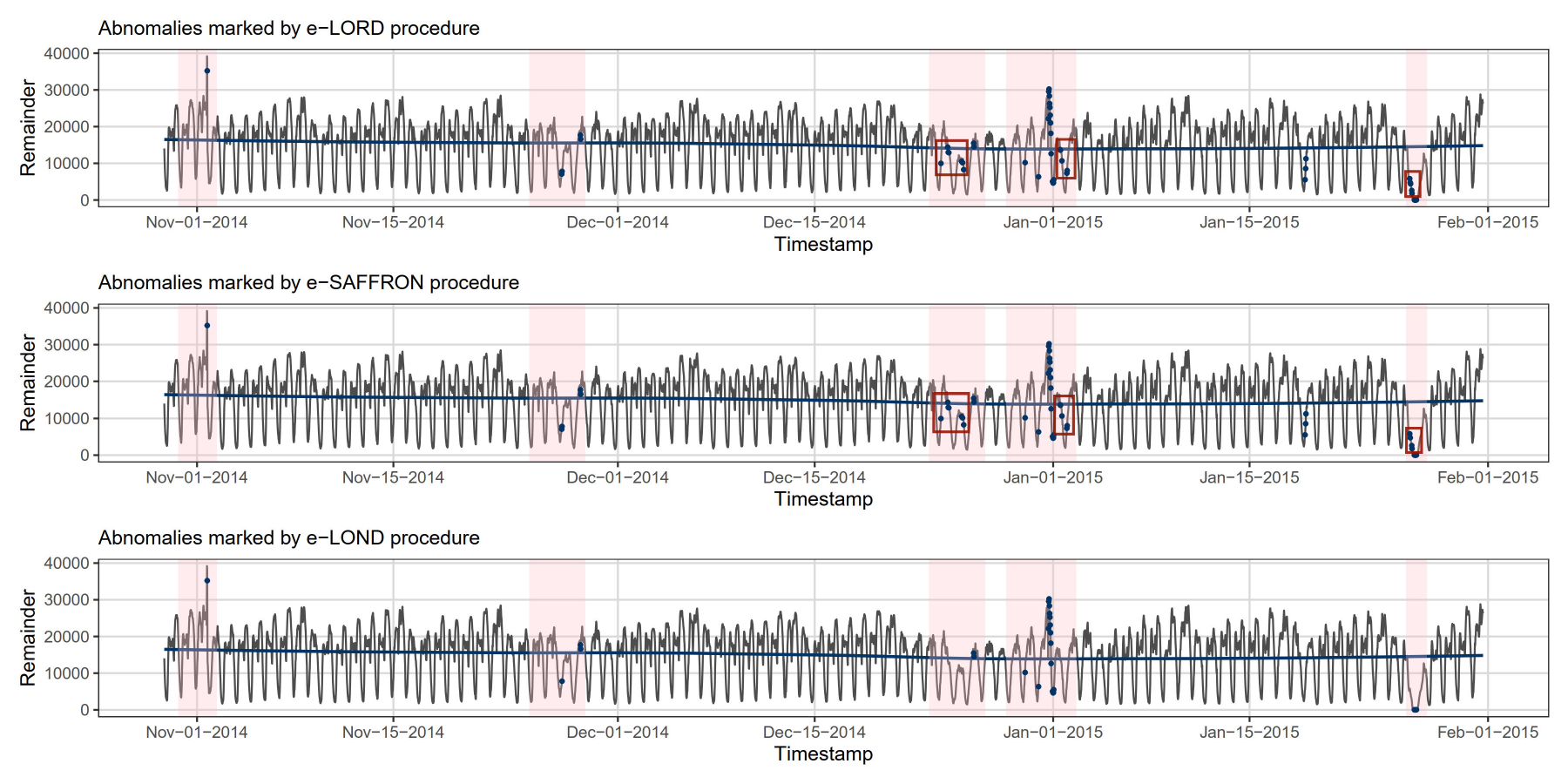}}
\caption{Anomaly points detected by e-LORD (above), e-SAFFRON (middle) and e-LOND (below). Rejection points of all procedures are marked by dark blue points. Red regions refer to known anomalies. The testing level is chosen as 0.1. Red squares indicate additional discovery of e-LORD and e-SAFFRON compared to e-LOND.}
\label{figure:combined_taxi_plots_e}
\end{center}
\vskip -0.1in
\end{figure*}

We analyze the NYC taxi dataset from the Numenta Anomaly Benchmark (NAB) repository \citep{lavin2015evaluating}. The dataset captures the number of NYC taxi passengers every 30 minutes from July 1, 2014, to January 31, 2015. Five known anomalous intervals correspond to notable events such as the NYC marathon, Thanksgiving, Christmas, New Year’s Day, and a snowstorm. We visualize the data with the known anomalous intervals highlighted using red rectangles in Figure \ref{figure:combined_taxi_plots_e}. The anomaly detection problem is formulated as an online multiple testing problem.

We employ the R package \texttt{stlplus} to perform STL decomposition 
(\citealp{cleveland1990stl}) to remove the seasonal and trend components. We derive tests on the residuals, which are assumed to form an independent sequence. The first 2000 time points are taken as the initial sequence for model calibration. We focus on the comparisons among e-value-based methods and apply e-LORD, e-SAFFRON and e-LOND to analyze this dataset. We use the estimated likelihood ratio as e-values (shown in \eqref{eq:likelihood_ratio_eval}). We choose $\omega_1=0.0001$ and $\lambda=0.1$ and set both $\psi$ and $\varphi$ as $0.5$.

\begin{table}[t]
\caption{Proportion of points rejected out of anomalous regions. e-LORD, e-SAFFRON and e-LOND control the estimated FDP under $\alpha=0.1$. e-SAFFRON loses some power due to a tiny proportion of true alternatives.}
\label{tab:proprotion_out_e}
\vskip 0.1in
\begin{center}
\begin{small}
\begin{tabular}{lcccc}
\toprule
Method&e-LORD & e-SAFFRON & e-LOND\\
\midrule
$\widehat{\FDP}$ & 0.085 & 0.087 & 0.061\\
Num Discovery & 47 & 46 & 33\\
\bottomrule    
\end{tabular}
\end{small}
\end{center}
\vskip -0.1in
\end{table}

We compare their performance in terms of the proportion of discoveries out of marked anomalous regions, denoted here as $\widehat{\FDP}$, and the number of discovered anomalous regions in \cref{tab:proprotion_out_e}. Our e-GAI procedures effectively maintain $\widehat{\FDP}$ below the target level. As illustrated in \cref{figure:combined_taxi_plots_e}, e-LORD and e-SAFFRON demonstrate higher power than e-LOND, as both identify more points within the anomalous regions, shown in red squares. More comparisons for p-value-based methods are shown in \cref{appendix:real_data}.

\subsection{Real Data: Dating Financial Bubbles}

We follow \citet{genoni2023dating}, building a sequential test on stock market prices for financial bubbles. Online testing procedures enable decision-making regarding bubble occurrence based on current observation before the subsequent one is observed. Controlling FDR is a proper guarantee to make decisions with a controllable proportion of mistakes. 

The analysis is performed on the monthly stock price of the Nasdaq series. The calibration is implemented by using the first $1/3$ observations, assuming the related period to be free of bubbles. \citet{genoni2023dating} uses the standard p-values of the ADF test (R package \texttt{urca}), which are calculated point-wise and thus do not satisfy the conditionally super-uniform property. Moreover, such a valid p-value is difficult to construct for time series. Building on \citet{dickey1981likelihood}, we reformulate the likelihood ratio for the unit root test as sequential e-values and further normalize them by the estimated conditional expectation under the null to guarantee conditional validity. 
The task is to identify the bubble beginning date (BBD) and bubble ending date (BED).

\begin{figure}[t]
\vskip 0.1in
\begin{center}
\centerline{\includegraphics[width=\columnwidth]{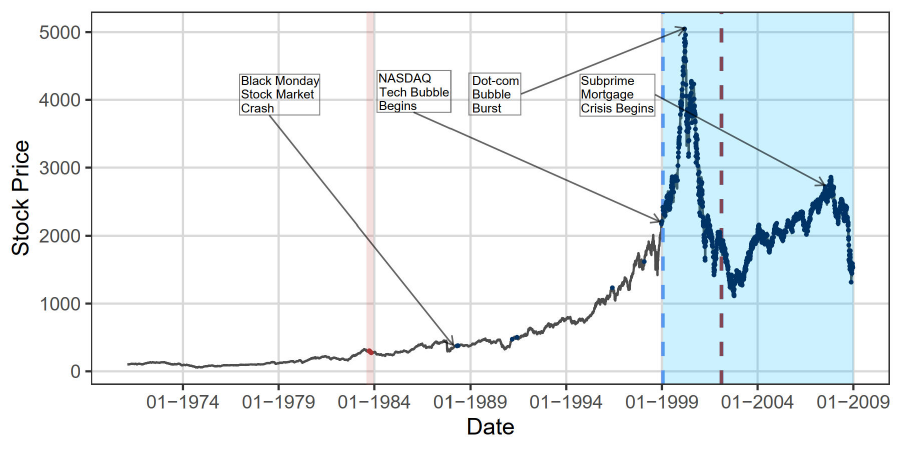}}
\caption{Dating of BBD and BED in the Nasdaq series. Blue and red points refer to e-LORD and e-LOND rejection points under $\alpha=0.005$, respectively. Blue and red regions refer to bubble influence regions detected by e-LORD and e-LOND. Blue and red dashed lines refer to BBD and BED marked by the Nasdaq series. Significant historical events are marked.}
\label{fig:nasdaq}
\end{center}
\vskip -0.1in
\end{figure}

As shown in \cref{fig:nasdaq}, with a significance level of $\alpha=0.005$, e-LOND only detects the initial potential change associated with the gradual emergence of technology companies in the marketplace. The e-LORD and e-SAFFRON exhibit similar behavior, providing a comprehensive characterization of the entire potential bubble influence region by marking a dense rejection region. The date 1990-01-20, corresponding to the onset of the bubble, has been identified, aligning with the classical view. The BED in \cref{fig:nasdaq} is determined based on empirical experience due to the long-term effects of the dot-com bubble burst. Specifically, while the burst of the dot-com bubble led to a reset in valuations, NASDAQ's volatility remained elevated, exceeding pre-bubble levels and establishing the groundwork for subsequent financial cycles. As a result, this sustained volatility led to persistent rejections in online procedures even after BBD. The region detected by e-LORD also includes other historical events after BED.

\section{Summary}\label{sec:summary}

In this paper, we propose a novel framework named e-GAI that introduces an oracle estimate of online FDP and ensures online FDR control under conditional validity with theoretical guarantees. The e-GAI dynamically allocates testing levels from the RAI perspective, which relies on both the number of previous rejections and prior costs and assigns less $\alpha$-wealth for each rejection. Within the e-GAI framework, we propose two new algorithms, e-LORD and e-SAFFRON, respectively. Both e-LORD and e-SAFFRON are more powerful than e-LOND under complicated dependence. We also propose mem-e-LORD and mem-e-SAFFRON correspondingly for the long-term performance to alleviate $\alpha$-death. Moreover, we demonstrate that e-GAI can be generalized to conditionally super-uniform p-values, making e-GAI a more versatile tool with reliable theoretical guarantees and increased practical value.

We conclude this work with two remarks. Firstly, although e-SAFFRON provides an adaptive online FDR procedure, how to more accurately approximate the proportions of nulls like ADDIS \citep{tian2019addis} deserves further research. Secondly, as discussed in \cref{sec:oracle_FDP_e_value}, a more accurate approximation of $\FDP^*_{\operatorname{e}}(t)$ can be expected when knowing a more specific dependence structure. It warrants future study on the relationship between the denominator of $\FDP^*_{\operatorname{e}}(t)$ and the dependence structure, and how to design a more efficient strategy to assign testing levels under such cases.


\section*{Acknowledgements}

We sincerely thank the anonymous reviewers for their insightful comments and constructive suggestions, which have greatly improved the quality of this manuscript. Haojie Ren was supported by the National Key R\&D Program of China (Grant No. 2024YFA1012200), the National Natural Science Foundation of China (Grant No. 12471262), the Young Elite Scientists Sponsorship Program by CAST and Shanghai Jiao Tong University 2030 Initiative. Changliang Zou was supported by the National Key R\&D Program of China (Grant Nos. 2022YFA1003703, 2022YFA1003800) and the National Natural Science Foundation of China (Grant No. 12231011).





\section*{Impact Statement}

This paper presents work whose goal is to advance the field of Machine Learning. There are many potential societal consequences of our work, none which we feel must be specifically highlighted here.

\bibliography{main}
\bibliographystyle{icml2025}

\newpage
\appendix
\onecolumn
\section{Proofs}\label{appendix:proofs}

Here, we include all the proofs of the results throughout the paper. 

\subsection{Proof of \cref{thm:main}}

\begin{proof}
    Suppose a desired level $\alpha$ is given. For all time $t$, we have
    \begin{equation*}
        \begin{aligned}
            \FDR(t) =&\bbE\left[\frac{\sum_{j\in\calH_0(t)}\delta_j}{R_t\vee 1}\right] \stackrel{(i)}{\leq}\bbE\left[\sum_{j\in\calH_0(t)}\frac{\delta_j}{R_{j-1}+1}\right]=\bbE\left[\sum_{j\in\calH_0(t)}\frac{\mathbbm{1}\{e_j\geq\frac{1}{\alpha_j}\}}{R_{j-1}+1}\right] \stackrel{(ii)}{\leq}\bbE\left[\sum_{j\in\calH_0(t)}\frac{e_j\alpha_j}{R_{j-1}+1}\right]\\ =&\bbE\left[\sum_{j\in\calH_0(t)}\frac{\bbE\left[e_j\mid \calF_{j-1}\right]\alpha_j}{R_{j-1}+1}\right] \stackrel{(iii)}{\leq}\bbE\left[\sum_{j\in\calH_0(t)}\frac{\alpha_j}{R_{j-1}+1}\right] =\bbE\left[\FDP^{*}_{\operatorname{e}}(t)\right]\stackrel{(iv)}{\leq}\alpha, 
        \end{aligned}
    \end{equation*}
    where the inequality (i) holds since $R_{j-1}+1\leq (R_t\vee 1)$ for every $j\in\{j\leq t:\delta_j=1\}$ by definition, the inequality (ii) holds since $\mathbbm{1}\{y\geq 1\}\leq y$ for any $y>0$, the inequality (iii) follows after taking iterated expectations by conditioning on $\calF_{j-1}$ and then applying the property of e-values, and inequality (iv) holds by condition. Thus, we finish the whole proof. 
\end{proof}

\subsection{Proof of \cref{prop:simple_FDR_control}}

\begin{proof}
    To prove the property, we only need to verify that $\bbE\left[\widehat{\FDP}^{\operatorname{LORD}}_{\operatorname{e}}(t)\right]\geq\bbE\left[\FDP_{\operatorname{e}}^{*}(t)\right]$ for all $t$, then we can obtain the desired result by \cref{thm:main}. This holds trivially since $\widehat{\FDP}_{\operatorname{e}}^{\operatorname{LORD}}(t)\geq\FDP_{\operatorname{e}}^{*}(t)$ by construction. 
\end{proof}

\subsection{Proof of \cref{prop:adaptive_FDR_control}}

\begin{proof}
    Given a desired level $\alpha$ and predictable sequence $\{\lambda_t\}_{t=1}^{\infty}$, for all $t$, 
    \begin{equation*}
        \begin{aligned}
            \bbE\left[\widehat{\FDP}_{\operatorname{e}} ^{\operatorname{SAFFRON}}(t)\right]= & \sum_{j=1}^{t}\bbE\left[\frac{\alpha_j}{R_{j-1}+1}\frac{\mathbbm{1}\left\{e_j< \frac{1}{\lambda_j}\right\}}{1-\lambda_j}\right] \\
            \geq & \sum_{j\in\calH_0(t)}\bbE\left[\frac{\alpha_j}{R_{j-1}+1}\frac{\mathbbm{1}\left\{e_j< \frac{1}{\lambda_j}\right\}}{1-\lambda_j}\right] \\
            = & \sum_{j\in\calH_0(t)}\bbE\left[\frac{\alpha_j}{R_{j-1}+1}\frac{\bbE\left[\mathbbm{1}\left\{e_j< \frac{1}{\lambda_j}\right\}\mid\calF_{j-1}\right]}{1-\lambda_j}\right] \\
            \stackrel{(i)}{\geq} & \sum_{j\in\calH_0(t)}\bbE\left[\frac{\alpha_j}{R_{j-1}+1}\right] = \bbE\left[\FDP^{*}_{\operatorname{e}}(t)\right], 
        \end{aligned}
    \end{equation*}
    where inequality (i) holds because $\bbE\left[\mathbbm{1}\left\{e_j< \frac{1}{\lambda_j}\right\}\mid\calF_{j-1}\right]\geq 1-\lambda_j$ by the property of e-values. This concludes the proof of part (a), and then (b) can be derived by \cref{thm:main}, which completes the proof. 
\end{proof}

\subsection{Proof of \cref{thm:mem-FDR_control}}

\begin{proof}
    Suppose a desired level $\alpha$ is given. For all time $t$, we have
    \begin{equation*}
        \begin{aligned}
            \operatorname{mem-FDR}(t) =&\bbE\left[\frac{\sum_{j\in\calH_0(t)}d^{t-j}\delta_j}{R_t^{\operatorname{d}}\vee 1}\right] \stackrel{(i)}{\leq}\bbE\left[\sum_{j\in\calH_0(t)}\frac{d^{t-j}\delta_j}{d^{t-j}(dR_{j-1}^{\operatorname{d}}+1)}\right]=\bbE\left[\sum_{j\in\calH_0(t)}\frac{\mathbbm{1}\{e_j\geq\frac{1}{\alpha_j}\}}{dR_{j-1}^{\operatorname{d}}+1}\right] \\
            \stackrel{(ii)}{\leq}&\bbE\left[\sum_{j\in\calH_0(t)}\frac{e_j\alpha_j}{dR_{j-1}^{\operatorname{d}}+1}\right]=\bbE\left[\sum_{j\in\calH_0(t)}\frac{\bbE\left[e_j\mid \calF_{j-1}\right]\alpha_j}{dR_{j-1}^{\operatorname{d}}+1}\right] \stackrel{(iii)}{\leq}\bbE\left[\sum_{j\in\calH_0(t)}\frac{\alpha_j}{dR_{j-1}^{\operatorname{d}}+1}\right] \\
            =&\bbE\left[\operatorname{mem-FDP}^{*}(t)\right]\stackrel{(iv)}{\leq}\alpha, 
        \end{aligned}
    \end{equation*}
    where the inequality (i) holds since $R_t^{\operatorname{d}}=\sum_{k=1}^{j-1}d^{t-k}\delta_k+d^{t-j}\delta_j+\sum_{k=j+1}^t d^{t-k}\delta_k\geq d^{t-j+1}R_{j-1}^{\operatorname{d}}+d^{t-j}\delta_j$ and hence
    $d^{t-j}(dR_{j-1}+1)\leq (R_t^{\operatorname{d}}\vee 1)$ for every $j\in\{j\leq t:\delta_j=1\}$ by definition, the inequality (ii) holds since $\mathbbm{1}\{y\geq 1\}\leq y$ for any $y>0$, the inequality (iii) follows after taking iterated expectations by conditioning on $\calF_{j-1}$ and then applying the property of e-values, and inequality (iv) holds by condition. Thus, we finish the whole proof. 
\end{proof}

\subsection{Proof of \cref{prop:mem_e_LORD_SAFFRON}}

\begin{proof}
    (a) To prove the property, we only need to verify that $\bbE\left[\operatorname{mem-}\widehat{\FDP}^{\operatorname{LORD}}(t)\right]\geq\bbE\left[\operatorname{mem-}\FDP^{*}(t)\right]$ for all $t$, then we can obtain the desired result by \cref{thm:mem-FDR_control}. This holds trivially since $\operatorname{mem-}\widehat{\FDP}^{\operatorname{LORD}}(t)\geq\operatorname{mem-}\FDP^{*}(t)$ by construction. 

    (b) Given a desired level $\alpha$ and predictable sequence $\{\lambda_t\}_{t=1}^{\infty}$, for all $t$, 
    \begin{equation*}
        \begin{aligned}
            \bbE\left[\operatorname{mem-}\widehat{\FDP}^{\operatorname{SAFFRON}}(t)\right]= & \sum_{j=1}^{t}\bbE\left[\frac{\alpha_j}{dR_{j-1}^{\operatorname{d}}+1}\frac{\mathbbm{1}\left\{e_j< \frac{1}{\lambda_j}\right\}}{1-\lambda_j}\right] \\
            \geq & \sum_{j\in\calH_0(t)}\bbE\left[\frac{\alpha_j}{dR_{j-1}^{\operatorname{d}}+1}\frac{\mathbbm{1}\left\{e_j< \frac{1}{\lambda_j}\right\}}{1-\lambda_j}\right] \\
            = & \sum_{j\in\calH_0(t)}\bbE\left[\frac{\alpha_j}{dR_{j-1}^{\operatorname{d}}+1}\frac{\bbE\left[\mathbbm{1}\left\{e_j< \frac{1}{\lambda_j}\right\}\mid\calF_{j-1}\right]}{1-\lambda_j}\right] \\
            \stackrel{(i)}{\geq} & \sum_{j\in\calH_0(t)}\bbE\left[\frac{\alpha_j}{dR_{j-1}^{\operatorname{d}}+1}\right] = \bbE\left[\operatorname{mem-FDP}^{*}(t)\right], 
        \end{aligned}
    \end{equation*}
    where inequality (i) holds because $\bbE\left[\mathbbm{1}\left\{e_j< \frac{1}{\lambda_j}\right\}\mid\calF_{j-1}\right]\geq 1-\lambda_j$ by the property of e-values. By \cref{thm:mem-FDR_control}, we have $\operatorname{mem-FDR}(t)\leq \alpha$ for all $t$. 
\end{proof}

\subsection{Proof of \cref{thm:p-value_control}}

\begin{proof}
    Recall that the decision $\delta_t=\mathbbm{1}\{p_t\leq\alpha_t\}$ for each time $t$. Suppose a desired level $\alpha$ is given. For all time $t$, we have
    \begin{equation*}
        \begin{aligned}
            \FDR(t) =&\bbE\left[\frac{\sum_{j\in\calH_0(t)}\delta_j}{R_t\vee 1}\right] \stackrel{(i)}{\leq}\bbE\left[\sum_{j\in\calH_0(t)}\frac{\delta_j}{R_{j-1}+1}\right]=\bbE\left[\sum_{j\in\calH_0(t)}\frac{\mathbbm{1}\{p_j\leq\alpha_j\}}{R_{j-1}+1}\right] \\
            =& \bbE\left[\sum_{j\in\calH_0(t)}\frac{\bbE\left[\mathbbm{1}\{p_j\leq\alpha_j\}\mid \calF_{j-1}\right]}{R_{j-1}+1}\right] \stackrel{(ii)}{\leq}\bbE\left[\sum_{j\in\calH_0(t)}\frac{\alpha_j}{R_{j-1}+1}\right] =\bbE\left[\FDP^{*}_{\operatorname{e}}(t)\right]\stackrel{(iii)}{\leq}\alpha, 
        \end{aligned}
    \end{equation*}
    where the inequality (i) holds since $R_{j-1}+1\leq (R_t\vee 1)$ for every $j\in\{j\leq t:\delta_j=1\}$ by definition, the inequality (ii) follows after taking iterated expectations by conditioning on $\calF_{j-1}$ and then applying the conditionally super-uniform property of p-values, and inequality (iii) holds by condition. Thus, we finish the whole proof. 
\end{proof}

\section{Deferred Discussions}

\subsection{Definition of PRDS}\label{sec:PRDS_def}

\citet{fisher2024online} introduced an online version of the well-known positive regression dependence on a subset (PRDS) condition proposed by \citet{benjamini2001control}, considered the performance of LORD++ and SAFFRON, and proved online FDR control when the testing statistics, p-values, are conditional PRDS. 

Before formally defining PRDS, it is necessary to first introduce the concept of increasing sets. A set $I\in \mathbb{R}^K$ is called increasing if $\mathbf{x}\in I$ implies $\mathbf{y}\in I$ for all $\mathbf{y}\geq \mathbf{x}$. Here $\mathbf{y}\geq \mathbf{x}$ implies that each component of $\mathbf{y}$ is no smaller than the corresponding component of $\mathbf{x}$. 

\begin{definition}{(Conditional PRDS between p-values; \citealp{fisher2024online})}\label{def:PRDS_p}
    The p-values are conditional PRDS if for each time $t$, any $j\leq t$ satisfying $j\in\calH_0(t)$, and increasing set $I\subset \mathbb{R}^{t}$, the probability $\bbP\left((p_1,\ldots,p_t)\in I\mid p_j=u, \calF_{j-1}\right)$ is non-decreasing in $u$. 
\end{definition}

\citet{wang2022false} discussed the PRDS condition on the studies of e-values in the offline setting. We generalize it to the version applicable to the online scenario. A set $D \in \mathbb{R}^K$ is called decreasing if $\mathbf{x}\in I$ implies $\mathbf{y}\in I$ for all $\mathbf{y}\leq \mathbf{x}$. 

\begin{definition}{(Conditional PRDS between e-values)}\label{def:PRDS_e}
    The e-values are conditional PRDS if for each time $t$, any $j\leq t$ satisfying $j\in\calH_0(t)$, and decreasing set $D\subset \mathbb{R}^{t}$, the probability $\bbP\left((e_1,\ldots,e_t)\in D\mid e_j=u, \calF_{j-1}\right)$ is non-increasing in $u$. 
\end{definition}

\subsection{Choices for $\omega_1$}\label{appendix:omega1}

In this section, we further elaborate on the motivation and advantage of updating $\omega_t$ in \eqref{eq:update_omega} from the RAI perspective. Building on this, we provide a theoretical justification for the recommended choice of $\omega_1=O(1/T)$ and present supporting experimental results.

In the e-LORD and e-SAFFRON algorithms, $\omega_t$ controls the proportion of the remaining $\alpha$-wealth allocated to the current testing. A larger $\omega_t$ indicates that more $\alpha$-wealth is currently invested, which also implies a greater possibility of rejecting the current hypothesis, and meanwhile, it will be more possible to exhaust the entire wealth. Therefore, we prioritize updating $\omega_t$ from the RAI perspective, dynamically allocating the testing levels and enabling data-driven updates to achieve higher power. In contrast, the testing levels $\alpha_t$ in e-LOND are derived from a pre-specified decay sequence that sums to 1 \citep{xu2024online}. 

A simplified version of \eqref{eq:update_omega} is to set $\varphi=\psi=0$ and thus $\omega_t=\omega_1$ for all $t$. In this case, a natural and reasonable way to choose $\omega_1$ is to assign equal weight at each time point, i.e., $\omega_1=1/T$, motivating the choice of initial value for dynamic updates. 

Empirical results support this analysis. The power results for different choices of $\omega_1$ across varying $T$ under an $\operatorname{AR(1)}$ model, introduced in \cref{appendix:simulation_mem}, are shown in \cref{tab:choice_omege}. From \cref{tab:choice_omege}, our algorithms with $\omega_1=1/T$ achieve the highest power and have the latest time of the last rejection and the largest tail testing level, supporting potential subsequent long-term testing. Moreover, it can be seen that the updates in e-LORD and e-SAFFRON are data-driven: as $T$ varies, the remaining wealth for these algorithms does not change significantly and shows robustness. In contrast, e-LOND uses a pre-specified allocation ratio, and as $T$ increases, the $\alpha$-wealth at time $T$ of e-LOND diminishes progressively. 

\begin{table}[ht]
\caption{Average results under an $\operatorname{AR}(1)$ model over 100 repetitions with $\mu_{\operatorname{c}}=4,\pi_1=0.4$, and $\alpha=0.05$.} 
\label{tab:choice_omege}
\vskip 0.1in
\begin{center}
\begin{small}
\begin{tabular}{cccccc}
\toprule
$T$ & Method & $\omega_1$ & Power (\%) & Time of the last rejection & $\alpha_T/\alpha(\times10^{-4})$\\
\midrule
\multirow{9}{*}{500} & \multirow{3}{*}{e-LORD} & $1/T$ & 70.0 & 498 & 1031.0 \\
& & $1/\sqrt{T}$ & 22.2 & 275 & 0.0 \\
& & $1/T^2$ & 8.6 & 483 & 0.7 \\
\\
& \multirow{3}{*}{e-SAFFRON} & $1/T$ & 70.5 & 498 & 1368.1 \\
& & $1/\sqrt{T}$ & 38.7 & 441 & 0.0 \\
& & $1/T^2$ & 8.0 & 483 & 0.6 \\
\\
& e-LOND & -- & 30.9 & 491 & 2.5 \\
\\
\hline
\\
\multirow{9}{*}{1000} & \multirow{3}{*}{e-LORD} & $1/T$ & 70.1 & 998 & 1029.0 \\
& & $1/\sqrt{T}$ & 16.2 & 406 & 0.0 \\
& & $1/T^2$ & 4.5 & 962 & 0.2 \\
\\
& \multirow{3}{*}{e-SAFFRON} & $1/T$ & 70.9 & 998 & 1366.7 \\
& & $1/\sqrt{T}$ & 28.0 & 706 & 0.0 \\
& & $1/T^2$ & 4.2 & 958 & 0.2 \\
\\
& e-LOND & -- & 23.9 & 983 & 1.0 \\
\bottomrule
\end{tabular}
\end{small}
\end{center}
\vskip -0.1in
\end{table}

\subsection{Recursive Update Forms of e-LORD \& e-SAFFRON}\label{sec:recursive_form}

In this section, we provide recursive update forms of e-LORD and e-SAFFRON, respectively. The computation is highly efficient and memory-friendly since the update of both $\omega_t$ (expressed in a recursive form in \eqref{eq:update_omega}) and $\alpha_t$ can be expressed in a recursive form as follows. 

To compute $\alpha_{t}$ of e-LORD in \cref{alg:e-LORD}, we define the remaining wealth as
\begin{equation*}
    \operatorname{rw}_t^{\operatorname{e-LORD}}=\alpha-\sum_{j=1}^{t-1}\frac{\alpha_j}{R_{j-1}+1}
\end{equation*}
and update 
\begin{equation*}
    \alpha_t=\omega_t \operatorname{rw}_t^{\operatorname{e-LORD}}(R_{t-1}+1). 
\end{equation*}
A similar recursive form of $\alpha_t$ of e-SAFFRON in \cref{alg:e-SAFFRON} can be obtained by defining the remaining wealth as
\begin{equation*}
    \operatorname{rw}_t^{\operatorname{e-SAFFRON}}=\alpha(1-\lambda)-\sum_{j=1}^{t-1}\frac{\alpha_j\mathbbm{1}\{e_j<1/\lambda\}}{R_{j-1}+1}
\end{equation*}
and update 
\begin{equation*}
    \alpha_t=\omega_t \operatorname{rw}_t^{\operatorname{e-SAFFRON}}(R_{t-1}+1). 
\end{equation*}

Through these formulations, the update of testing levels $\alpha_t$ is expressed as a recursive relationship based on information from the previous time step, allowing us to compute it recursively and efficiently. 

We evaluate the runtime of various algorithms in the experiments under an $\operatorname{AR}(1)$ model, introduced in \cref{appendix:simulation_mem}, and the results are included in \cref{tab:runtime}. It shows that the e-LORD and e-SAFFRON algorithms are computationally efficient.

\begin{table}[htpb]
\caption{Average runtime of different algorithms under an $\operatorname{AR}(1)$ model over 100 repetitions.} 
\label{tab:runtime}
\vskip 0.1in
\begin{center}
\begin{small}
\begin{tabular}{ccccccc}
\toprule
& e-LORD & e-SAFFRON & e-LOND & LORD++ & SAFFRON & SupLORD \\
\midrule
Runtime ($\times10^{-4}$s) & 9.7 & 18.8 & 18.0 & 9.9 & 8.1 & 72.1 \\
\bottomrule
\end{tabular}
\end{small}
\end{center}
\vskip -0.1in
\end{table}

\subsection{Extension of the e-GAI Framework to p-values}\label{appendix:p-RAI}

In this section, we adapt the e-GAI framework to p-values and analyze the corresponding algorithms for the long-term performance following the same strategy in \cref{sec:our_framework} and \cref{sec:mem-FDR}. Recall that the decision $\delta_t=\mathbbm{1}\{p_t\leq\alpha_t\}$ for each time $t$. 

\paragraph{Extension of e-LORD \& e-SAFFRON to p-values.}

\cref{thm:p-value_control} in the main text provides a general theoretical result for FDR control with conditionally super-uniform p-values. Leveraging \cref{thm:p-value_control}, both e-LORD and e-SAFFRON can be adapted to the corresponding version applicable to p-values by replacing $\mathbbm{1}\{e_t\geq 1/\alpha_t\}$ with $\mathbbm{1}\{p_t\leq\alpha_t\}$. Specifically, $\widehat{\FDP}_{\operatorname{e}} ^{\operatorname{LORD}}(t)$ in \eqref{eq:simple_upper_bound} and the update rule for $\alpha_t$ in \eqref{eq:simple_threshold} of the e-LORD algorithm can be directly adapted to p-values. The e-SAFFRON algorithm estimates the proportion of true nulls, requiring a slight modification when converting e-values to p-values. Considering $\lambda_t\equiv\lambda$, the modified FDP overestimate and testing levels of e-SAFFRON are respectively given by $\sum_{j=1}^{t}\frac{\alpha_j}{R_{j-1}+1}\frac{\mathbbm{1}\left\{p_j> \lambda\right\}}{1-\lambda}$ and $\alpha_t=\omega_t\left(\alpha(1-\lambda)-\sum_{j=1}^{t-1}\frac{\alpha_j\mathbbm{1}\left\{p_j>\lambda\right\}}{R_{j-1}+1}\right)(R_{t-1}+1)$. 

To clearly distinguish the algorithms, we refer to the versions of e-LORD and e-SAFFRON adapted to p-values as pL-RAI and pS-RAI, respectively, emphasizing that the testing statistics are p-values and testing levels $\alpha_t$ are updated by the RAI strategy. Both pL-RAI and pS-RAI can realize provable FDR control under conditional super-uniformity. To avoid notational confusion, we define $\widehat{\FDP}^{\operatorname{pL-RAI}}(t):=\widehat{\FDP}_{\operatorname{e}} ^{\operatorname{LORD}}(t)=\sum_{j=1}^{t}\frac{\alpha_j}{R_{j-1}+1}$ and $\widehat{\FDP}^{\operatorname{pS-RAI}}(t):=\sum_{j=1}^{t}\frac{\alpha_j}{R_{j-1}+1}\frac{\mathbbm{1}\left\{p_j> \lambda\right\}}{1-\lambda}$.

\begin{proposition}\label{prop:p-value_control}
    Suppose the online p-values are conditionally super-uniform in \eqref{eq:property_of_online_pval}. 
    \begin{itemize}
        \item[(a)] For $\alpha_t\in\mathcal{F}_{t-1}$ satisfying $\widehat{\FDP}^{\operatorname{pL-RAI}}(t)\leq\alpha$, we have $\FDR(t)\leq \alpha$ for all $t$. 
        \item[(b)] Given a predictable sequence $\{\lambda_t\}_{t=1}^{\infty}$, for $\alpha_t\in\mathcal{F}_{t-1}$ satisfying $\widehat{\FDP}^{\operatorname{pS-RAI}}(t)\leq\alpha$, we have: (i) $\bbE\left[\widehat{\FDP}^{\operatorname{pS-RAI}}(t)\right]\geq\bbE\left[\FDP^{*}_{\operatorname{e}}(t)\right]$ where $\FDP^{*}_{\operatorname{e}}(t)$ is defined in \eqref{eq:oracle_e_FDP_estimate}, and (ii) $\FDR(t)\leq \alpha$ for all $t$. 
    \end{itemize}
\end{proposition}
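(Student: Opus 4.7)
The plan is to leverage \cref{thm:p-value_control}, which provides online FDR control whenever $\bbE[\FDP^{*}_{\operatorname{e}}(t)] \leq \alpha$. Hence for both parts it suffices to show that the proposed overestimator dominates $\FDP^{*}_{\operatorname{e}}(t)$ in expectation; the assumed constraint $\widehat{\FDP} \leq \alpha$ then propagates to $\bbE[\FDP^{*}_{\operatorname{e}}(t)] \leq \alpha$, and \cref{thm:p-value_control} completes the argument. In other words, parts (a) and (b) are direct translations of \cref{prop:simple_FDR_control} and \cref{prop:adaptive_FDR_control} with \cref{thm:main} replaced by \cref{thm:p-value_control}.

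For part (a) the dominance is pointwise and immediate. Since $\widehat{\FDP}^{\operatorname{pL-RAI}}(t) = \sum_{j=1}^{t}\alpha_j/(R_{j-1}+1)$ sums over all indices while $\FDP^{*}_{\operatorname{e}}(t) = \sum_{j\in\calH_0(t)} \alpha_j/(R_{j-1}+1)$ sums only over the nulls, and every summand is non-negative, one has $\widehat{\FDP}^{\operatorname{pL-RAI}}(t) \geq \FDP^{*}_{\operatorname{e}}(t)$ almost surely. Taking expectations and invoking \cref{thm:p-value_control} yields $\FDR(t) \leq \alpha$.

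Part (b) is the substantive piece and mirrors the proof of \cref{prop:adaptive_FDR_control}, with the SAFFRON-style inequality adapted to p-values. First restrict the sum defining $\widehat{\FDP}^{\operatorname{pS-RAI}}(t)$ to the null indices (dropping the non-negative non-null terms can only decrease the sum), then apply the tower property by conditioning on $\calF_{j-1}$. Since $\alpha_j$, $R_{j-1}$, and $\lambda_j$ are all $\calF_{j-1}$-measurable by the predictability assumptions, they can be pulled outside the inner conditional expectation, leaving
\begin{equation*}
\bbE\!\left[\mathbbm{1}\{p_j > \lambda_j\} \,\big|\, \calF_{j-1}\right] = 1 - \bbP(p_j \leq \lambda_j \mid \calF_{j-1}) \geq 1 - \lambda_j
\end{equation*}
by the conditional super-uniformity \eqref{eq:property_of_online_pval}. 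The factor $1/(1-\lambda_j)$ then cancels, giving $\bbE[\widehat{\FDP}^{\operatorname{pS-RAI}}(t)] \geq \sum_{j\in\calH_0(t)} \bbE[\alpha_j/(R_{j-1}+1)] = \bbE[\FDP^{*}_{\operatorname{e}}(t)]$, which is claim (i). Claim (ii) then follows by chaining this inequality with the hypothesis $\widehat{\FDP}^{\operatorname{pS-RAI}}(t)\leq\alpha$ and applying \cref{thm:p-value_control}.

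The main obstacle is essentially bookkeeping: one must be careful that the direction of the p-value event ($p_j > \lambda_j$) is flipped relative to the e-value SAFFRON analog (which uses $e_j < 1/\lambda_j$), and that $\lambda_j$ is only pulled out of the conditional expectation after invoking its predictability. Beyond these routine checks, no further technical difficulty arises; the proposition is a clean corollary of \cref{thm:p-value_control} together with two elementary dominance bounds.
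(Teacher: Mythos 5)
Your proof is correct and is exactly the argument the paper intends: the paper omits the details of \cref{prop:p-value_control}, stating only that it follows the proofs of \cref{prop:simple_FDR_control} and \cref{prop:adaptive_FDR_control} with the key inequality $\bbE\left[\mathbbm{1}\left\{p_j>\lambda_j\right\}\mid\calF_{j-1}\right]\geq 1-\lambda_j$ from conditional super-uniformity, which is precisely the substitution you carry out before invoking \cref{thm:p-value_control}. No gaps.
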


The proof strategy for \cref{prop:p-value_control} follows the same approach as \cref{prop:simple_FDR_control} and \cref{prop:adaptive_FDR_control}, with the key distinction residing in the application of the inequality $\bbE\left[\mathbbm{1}\left\{p_j>\lambda_j\right\}\mid\calF_{j-1}\right]\geq 1-\lambda_j$, which is derived from the conditionally super-uniform property of p-values. Hence, we omit the detailed derivations.

\paragraph{Long-Term Performance of pL-RAI \& pS-RAI.}

To control mem-FDR using p-values, we demonstrate that $\operatorname{mem-FDP}^*(t)$ in \eqref{eq:oracle_e_mem-FDP_estimate} can still serve as an oracle estimate of mem-FDP. By controlling this estimator to be bounded, we can achieve mem-FDR control for p-values satisfying the conditionally super-uniform property in \eqref{eq:property_of_online_pval}.

\begin{theorem}\label{thm:p-value_control_memFDR}
    Suppose the online p-values are conditionally super-uniform in \eqref{eq:property_of_online_pval}. If $\operatorname{mem-FDP}^*(t)$ in \eqref{eq:oracle_e_mem-FDP_estimate} satisfies $\bbE\left[\operatorname{mem-FDP}^*(t)\right]\leq \alpha$, then $\operatorname{mem-FDR}(t)\leq\alpha$ for all $t$. 
\end{theorem}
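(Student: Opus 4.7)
The plan is to mirror the proof of \cref{thm:mem-FDR_control} almost verbatim, replacing the e-value specific steps by direct use of the conditional super-uniformity property of the p-values. I would start from the definition
\[
    \operatorname{mem-FDR}(t) = \bbE\left[\frac{\sum_{j\in\calH_0(t)} d^{t-j}\delta_j}{R_t^{\operatorname{d}}\vee 1}\right],
\]
and, on the event $\delta_j=1$, reuse the lower bound $R^{\operatorname{d}}_t \ge d^{t-j+1}R^{\operatorname{d}}_{j-1} + d^{t-j}\delta_j$ already exploited in \cref{thm:mem-FDR_control} to conclude that $d^{t-j}(dR^{\operatorname{d}}_{j-1}+1) \le R^{\operatorname{d}}_t \vee 1$. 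Because this bound depends only on the decision sequence $\{\delta_j\}$, it carries over verbatim to the p-value setting. Cancelling the $d^{t-j}$ factor then reduces each summand to $\delta_j/(dR^{\operatorname{d}}_{j-1}+1)$.

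Next I would substitute $\delta_j = \mathbbm{1}\{p_j\le\alpha_j\}$ and apply the tower property by conditioning on $\calF_{j-1}$. Since $\alpha_j$ is $\calF_{j-1}$-measurable and $R^{\operatorname{d}}_{j-1}$ is determined by $\delta_1,\dots,\delta_{j-1}$, the denominator can be pulled out of the inner conditional expectation, and the conditional super-uniformity in \eqref{eq:property_of_online_pval} directly gives $\bbE[\mathbbm{1}\{p_j\le\alpha_j\}\mid\calF_{j-1}]\le\alpha_j$ for each $j\in\calH_0(t)$. Summing over $j\in\calH_0(t)$ yields exactly $\operatorname{mem-FDP}^{*}(t)$, and the hypothesis $\bbE[\operatorname{mem-FDP}^{*}(t)]\le\alpha$ closes the chain.

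The only step where this departs from the proof of \cref{thm:mem-FDR_control} is the conditional-expectation argument: there, one first invokes the Markov-type inequality $\mathbbm{1}\{e_j\ge 1/\alpha_j\}\le e_j\alpha_j$ and then uses $\bbE[e_j\mid\calF_{j-1}]\le 1$, whereas in the p-value setting super-uniformity applies in one stroke and bypasses the Markov step. I do not anticipate any genuine obstacle; the delicate point is merely keeping track of which quantities are $\calF_{j-1}$-measurable so that the inequality $\bbE[\mathbbm{1}\{p_j\le\alpha_j\}\mid\calF_{j-1}]\le\alpha_j$ applies with $\alpha_j$ treated as a constant, which is guaranteed by the predictability requirement on $\{\alpha_t\}$ stated in \cref{sec:preliminaries}.
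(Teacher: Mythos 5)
Your proposal is correct and follows exactly the route the paper intends: the paper omits this proof precisely because it is obtained by splicing the denominator bound $d^{t-j}(dR^{\operatorname{d}}_{j-1}+1)\le R^{\operatorname{d}}_t\vee 1$ from the proof of \cref{thm:mem-FDR_control} with the tower-property/super-uniformity step from the proof of \cref{thm:p-value_control}, which is what you do. Your observation that super-uniformity replaces the Markov-plus-$\bbE[e_j\mid\calF_{j-1}]\le 1$ step in one stroke matches the paper's own treatment of the non-memory p-value case.
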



Leveraging \cref{thm:p-value_control_memFDR}, both mem-e-LORD and mem-e-SAFFRON can be adapted to the corresponding version applicable to p-values. Specifically, $\operatorname{mem-} \widehat{\operatorname{FDP}}^{\operatorname{LORD}}(t)$ in \eqref{eq:simple_upper_bound_mem} and the update rule for $\alpha_t$ in \eqref{eq:simple_threshold_mem} of the mem-e-LORD algorithm can be directly adapted to p-values. 
Considering $\lambda_t\equiv\lambda$, the modified mem-FDP overestimate and testing levels of mem-e-SAFFRON are respectively given by $\sum_{j=1}^{t}\frac{\alpha_j}{dR_{j-1}^{\operatorname{d}}+1}\frac{\mathbbm{1}\left\{p_j> \lambda\right\}}{1-\lambda}$ and $\alpha_t=\omega_t\left(\alpha(1-\lambda)-\sum_{j=1}^{t-1}\frac{\alpha_j\mathbbm{1}\left\{p_j>\lambda\right\}}{dR_{j-1}^{\operatorname{d}}+1}\right)(dR_{t-1}^{\operatorname{d}}+1)$. 

To clearly distinguish the algorithms, we refer to the versions of mem-e-LORD and mem-e-SAFFRON adapted to p-values as mem-pL-RAI and mem-pS-RAI, respectively. Both mem-pL-RAI and mem-pS-RAI can realize provable mem-FDR control under conditional super-uniformity. To avoid notational confusion, we define $\operatorname{mem-} \widehat{\operatorname{FDP}}^{\operatorname{pL-RAI}}(t):=\operatorname{mem-} \widehat{\operatorname{FDP}}^{\operatorname{LORD}}(t)=\sum_{j=1}^{t}\frac{\alpha_j}{dR_{j-1}^{\operatorname{d}}+1}$ and $\operatorname{mem-} \widehat{\operatorname{FDP}}^{\operatorname{pS-RAI}}(t):=\sum_{j=1}^{t}\frac{\alpha_j}{dR_{j-1}^{\operatorname{d}}+1}\frac{\mathbbm{1}\left\{p_j> \lambda\right\}}{1-\lambda}$. 

\begin{proposition}\label{prop:p-value_control_memFDR}
    Suppose the online p-values are conditionally super-uniform in \eqref{eq:property_of_online_pval}. 
    \begin{itemize}
        \item[(a)] For $\alpha_t\in\mathcal{F}_{t-1}$ satisfying $\operatorname{mem-} \widehat{\operatorname{FDP}}^{\operatorname{pL-RAI}}(t)\leq\alpha$, we have $\operatorname{mem-FDR}(t)\leq \alpha$ for all $t$. 
        \item[(b)] Given a predictable sequence $\{\lambda_t\}_{t=1}^{\infty}$, for $\alpha_t\in\mathcal{F}_{t-1}$ satisfying $\operatorname{mem-} \widehat{\operatorname{FDP}}^{\operatorname{pS-RAI}}(t)\leq\alpha$, we have: (i) $\bbE\left[\operatorname{mem-} \widehat{\operatorname{FDP}}^{\operatorname{pS-RAI}}(t)\right]\geq\bbE\left[\operatorname{mem-FDP}^{*}(t)\right]$ where $\operatorname{mem-FDP}^{*}(t)$ is defined in \eqref{eq:oracle_e_mem-FDP_estimate}, and (ii) $\operatorname{mem-FDR}(t)\leq \alpha$ for all $t$. 
    \end{itemize}
\end{proposition}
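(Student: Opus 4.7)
The plan is to mirror the proof of \cref{prop:mem_e_LORD_SAFFRON} almost verbatim, substituting the conditional super-uniformity of p-values in \eqref{eq:property_of_online_pval} for the e-value validity condition, and invoking \cref{thm:p-value_control_memFDR} as the terminal step in place of \cref{thm:mem-FDR_control}. The structure of the argument is unchanged; only the mechanism that controls the conditional expectation of the indicator term needs to be re-derived for p-values.

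For part (a), the key observation is that $\operatorname{mem-}\widehat{\operatorname{FDP}}^{\operatorname{pL-RAI}}(t)$ and $\operatorname{mem-FDP}^{*}(t)$ have an identical summand structure $\alpha_j/(dR_{j-1}^{\operatorname{d}}+1)$, but the former sums over all $j\leq t$ while the latter sums only over $j\in\mathcal{H}_0(t)$. Since every summand is non-negative, we immediately have $\operatorname{mem-}\widehat{\operatorname{FDP}}^{\operatorname{pL-RAI}}(t)\geq\operatorname{mem-FDP}^{*}(t)$ pointwise, hence in expectation. Combining this with the assumption $\operatorname{mem-}\widehat{\operatorname{FDP}}^{\operatorname{pL-RAI}}(t)\leq\alpha$ yields $\bbE[\operatorname{mem-FDP}^{*}(t)]\leq\alpha$, and then \cref{thm:p-value_control_memFDR} delivers $\operatorname{mem-FDR}(t)\leq\alpha$.

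For part (b)(i), I would write out
\begin{equation*}
\bbE\bigl[\operatorname{mem-}\widehat{\operatorname{FDP}}^{\operatorname{pS-RAI}}(t)\bigr]
= \sum_{j=1}^{t}\bbE\!\left[\frac{\alpha_j}{dR_{j-1}^{\operatorname{d}}+1}\,\frac{\mathbbm{1}\{p_j>\lambda_j\}}{1-\lambda_j}\right]
\geq \sum_{j\in\calH_0(t)}\bbE\!\left[\frac{\alpha_j}{dR_{j-1}^{\operatorname{d}}+1}\,\frac{\mathbbm{1}\{p_j>\lambda_j\}}{1-\lambda_j}\right],
\end{equation*}
then condition on $\mathcal{F}_{j-1}$ inside each term (using that $\alpha_j$, $R_{j-1}^{\operatorname{d}}$, and $\lambda_j$ are all $\mathcal{F}_{j-1}$-measurable), and apply the conditional super-uniformity \eqref{eq:property_of_online_pval} in the form $\bbE[\mathbbm{1}\{p_j>\lambda_j\}\mid\mathcal{F}_{j-1}]\geq 1-\lambda_j$ for $j\in\calH_0(t)$. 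This reduces the right-hand side to $\sum_{j\in\calH_0(t)}\bbE[\alpha_j/(dR_{j-1}^{\operatorname{d}}+1)]=\bbE[\operatorname{mem-FDP}^{*}(t)]$, establishing (i). For (b)(ii), the assumption $\operatorname{mem-}\widehat{\operatorname{FDP}}^{\operatorname{pS-RAI}}(t)\leq\alpha$ combined with (i) gives $\bbE[\operatorname{mem-FDP}^{*}(t)]\leq\alpha$, and \cref{thm:p-value_control_memFDR} closes the argument.

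There is no real obstacle here, since the p-value analog \cref{thm:p-value_control_memFDR} has already been established and plays the same role that \cref{thm:mem-FDR_control} plays in the e-value case. The one subtle point to check is measurability: since $\lambda_t$ is predictable, $R_{j-1}^{\operatorname{d}}$ is $\mathcal{F}_{j-1}$-measurable, and $\alpha_j\in\mathcal{F}_{j-1}$ by assumption, pulling the indicator conditional expectation inside is fully justified. Everything else is a transcription of the e-value proof with $\mathbbm{1}\{e_j<1/\lambda_j\}$ replaced by $\mathbbm{1}\{p_j>\lambda_j\}$ and the e-value Markov-type bound replaced by the super-uniformity inequality.
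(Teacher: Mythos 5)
Your proposal is correct and is essentially the proof the paper intends: the paper explicitly omits this argument, stating that it follows the same reasoning as \cref{prop:mem_e_LORD_SAFFRON} with \cref{thm:p-value_control_memFDR} in place of \cref{thm:mem-FDR_control}, which is exactly what you carry out. The substitution of the super-uniformity bound $\bbE[\mathbbm{1}\{p_j>\lambda_j\}\mid\calF_{j-1}]\geq 1-\lambda_j$ for the e-value Markov bound, and the measurability check for $\alpha_j$, $\lambda_j$, and $R_{j-1}^{\operatorname{d}}$, are the only points of substance and you handle both correctly.
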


We omit the proofs of \cref{thm:p-value_control_memFDR} and \cref{prop:p-value_control_memFDR}, as they follow analogous reasoning and use identical techniques to those previously demonstrated. 

\subsection{Connection to Existing Methods in GAI under Independence}\label{appendix:independent_eGAI_GAI}


In this section, we demonstrate the FDR control with independent e-values, and compare the e-GAI framework with the GAI methods, highlighting that e-GAI is a unified framework.

When working with independent e-values, the correlation between the number of false rejections and the total number of rejections can be analyzed using the leave-one-out technique \citep[Lemma 1]{ramdas2018saffron}, which ensures the control of FDR. This property is formalized as follows, paralleling \cref{thm:main} in the independent situation. 

\begin{theorem}\label{thm:independent}
    Suppose online e-values are valid in \eqref{eq:property_of_online_eval}, and the null e-values are independent of each other and of the non-nulls. Under independence, let the oracle e-value-based estimate of FDP be given as
    \begin{equation} \label{eq:oracle_e_FDP_estimate_independence}
        \FDP^{*}_{\operatorname{e-ind}}(t)=\sum_{j\in\calH_0(t)}\frac{\alpha_j}{R_t\vee 1}.
    \end{equation}
    If $\bbE\left[\FDP^{*}_{\operatorname{e-ind}}(t)\right]\leq \alpha$ and $\{\alpha_t\}$ is a monotone function of $\bm{\delta}_{t-1}$ for all $t$, then $\FDR(t)\leq\alpha$ for all $t$. 
\end{theorem}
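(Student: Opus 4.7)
The plan is to prove \cref{thm:independent} via a leave-one-out argument in the spirit of Lemma~1 of \citet{ramdas2018saffron}, with Markov's inequality standing in for the super-uniform comparison used for p-values. For each null index $j\in\calH_0(t)$, I would introduce a \emph{ghost} rejection process $\{\delta_k^{(j)}\}_{k\leq t}$ obtained by substituting $e_j\to +\infty$ while keeping every other $e_i$ unchanged, and write $R_t^{(j)}=\sum_{k=1}^t\delta_k^{(j)}$ for the resulting rejection count. Since $\alpha_j$ is $\calF_{j-1}$-measurable and $\alpha_k^{(j)}$ for $k>j$ depends only on $\delta_1^{(j)},\ldots,\delta_{k-1}^{(j)}$ (which in turn are functions of $\{e_i:i\neq j\}$ after the substitution), both $\alpha_j$ and $R_t^{(j)}$ are measurable with respect to $\{e_i:i\neq j\}$. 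The hypothesis that null e-values are independent of each other and of the non-nulls then gives that $e_j$ is independent of $(\alpha_j,R_t^{(j)})$.

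Two properties of the ghost process drive the argument. First, coordinate-wise monotonicity of $\alpha_t$ in $\bm{\delta}_{t-1}$, together with $\delta_j^{(j)}=1\geq\delta_j$, propagates by induction to $\alpha_k^{(j)}\geq\alpha_k$ and hence $\delta_k^{(j)}=\mathbbm{1}\{e_k\geq 1/\alpha_k^{(j)}\}\geq\delta_k$ for every $k>j$, so $R_t^{(j)}\geq R_t$. Second, on $\{\delta_j=1\}$ the substitution is vacuous (since $e_j\geq 1/\alpha_j$ already), so the two processes coincide and $R_t=R_t^{(j)}$ there. Combined, these yield the pointwise identity
\begin{equation*}
    \frac{\delta_j}{R_t\vee 1}=\frac{\delta_j}{R_t^{(j)}\vee 1},
\end{equation*}
which is the crucial step that replaces the $e_j$-dependent denominator by one that is independent of $e_j$.

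From here the finish is short. Conditioning on $(\alpha_j,R_t^{(j)})$, applying the Markov-type bound $\mathbbm{1}\{e_j\geq 1/\alpha_j\}\leq \alpha_j e_j$, and using $\bbE[e_j\mid \alpha_j,R_t^{(j)}]=\bbE[e_j]\leq 1$ (by independence and the tower property applied to \eqref{eq:property_of_online_eval}) gives
\begin{equation*}
    \bbE\!\left[\frac{\delta_j}{R_t\vee 1}\right]=\bbE\!\left[\frac{\mathbbm{1}\{e_j\geq 1/\alpha_j\}}{R_t^{(j)}\vee 1}\right]\leq \bbE\!\left[\frac{\alpha_j}{R_t^{(j)}\vee 1}\right]\leq \bbE\!\left[\frac{\alpha_j}{R_t\vee 1}\right],
\end{equation*}
where the last inequality invokes $R_t^{(j)}\geq R_t$ (and is the point at which the direction of the monotonicity finally benefits the FDR bound). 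Summing over $j\in\calH_0(t)$ and applying the hypothesis $\bbE[\FDP^{*}_{\operatorname{e-ind}}(t)]\leq\alpha$ then yields $\FDR(t)\leq\alpha$.

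The main obstacle I anticipate is the monotonicity step. One has to pin down that ``monotone function of $\bm{\delta}_{t-1}$'' refers to coordinate-wise \emph{non-decreasing}, matching the LORD/SAFFRON convention that rejections enlarge the $\alpha$-wealth, and then carefully verify the induction from $\delta_j^{(j)}\geq\delta_j$ to $R_t^{(j)}\geq R_t$ through the e-value decision rule $\delta_k=\mathbbm{1}\{e_k\geq 1/\alpha_k\}$ (the mapping $\alpha\mapsto \mathbbm{1}\{e\geq 1/\alpha\}$ is non-decreasing, which is what makes the propagation work). A subtler bookkeeping issue is that the comparison $R_t^{(j)}\geq R_t$ has the ``wrong sign'' for the denominator and must therefore be applied only \emph{after} Markov is used to eliminate $e_j$; reversing the order would lose the bound.
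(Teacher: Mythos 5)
Your proof is correct and follows essentially the same route as the paper: the paper's proof writes the same chain of inequalities but delegates the key step (bounding $\mathbbm{1}\{e_j\geq 1/\alpha_j\}/(R_t\vee 1)$ by $\alpha_j/(R_t\vee 1)$ despite the denominator's dependence on $e_j$) to the leave-one-out lemma of \citet{ramdas2017online,ramdas2018saffron} together with the e-to-p transformation, which is exactly the ghost-process argument you spell out in full. Your explicit treatment of the monotonicity induction, the pointwise identity on $\{\delta_j=1\}$, and the order in which Markov's inequality and the comparison $R_t^{(j)}\geq R_t$ must be applied is a faithful (and more self-contained) rendering of what the paper cites.
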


\begin{proof}

    To control online FDR, we have 
    \begin{equation*}
        \begin{aligned}
            \FDR(t)= & \bbE\left[\frac{\sum_{j\in\calH_0(t)}\delta_j}{R_t\vee 1}\right]=\bbE\left[\frac{\sum_{j\in\calH_0(t)}\mathbbm{1}\left\{e_j\geq\frac{1}{\alpha_j}\right\}}{R_t\vee 1}\right]=\bbE\left[\sum_{j\in\calH_0(t)}\bbE\left[\frac{\mathbbm{1}\left\{\frac{1}{e_j}\leq\alpha_j\right\}}{R_t\vee 1}\mid\calF_{j-1}\right]\right]\\ \stackrel{(i)}{\leq} & \bbE\left[\sum_{j\in\calH_0(t)}\bbE\left[\frac{\alpha_j}{R_t\vee 1}\mid\calF_{j-1}\right]\right]=\bbE\left[\frac{\sum_{j\in\calH_0(t)}\alpha_j}{R_t\vee 1}\right]\leq \alpha, 
        \end{aligned}
    \end{equation*}
    where the inequality (i) uses the transformation from e-values into p-values and the leave-one-out technique as in the \citep[Lemma 1]{ramdas2017online} and \citep[Lemma 1]{ramdas2018saffron} due to the independence. Thus we finish the whole proof. 
\end{proof}

\cref{thm:independent} builds a bridge between our framework and previous methods based on p-values in prior works introduced in \cref{sec:GAI}. 
Specifically, we established the connection between e-LORD and LORD, as well as between e-SAFFRON and SAFFRON, respectively. 

\paragraph{e-LORD \& LORD++.}
Within the e-GAI framework, we can overestimate $\FDP^{*}_{\operatorname{e-ind}}(t)$ in \eqref{eq:oracle_e_FDP_estimate_independence} by $\widehat{\FDP}_{\operatorname{e-ind}}^{\operatorname{LORD}}(t):=\frac{\sum_{j=1}^{t} \alpha_j}{R_t \vee 1}$, same as $\widehat{\FDP}^{\operatorname{LORD}}(t)$ in \eqref{eq:LORD_upper_bound}, in the independent scenario. Then the process of generating testing levels $\{\alpha_t\}$ by LORD++ \citep{ramdas2017online} can be regarded as the e-LORD algorithm, as it satisfies the condition $\widehat{\FDP}_{\operatorname{e-ind}}^{\operatorname{LORD}}(t)\leq \alpha$. Therefore, LORD++ can be included within the e-GAI framework. 

Furthermore, we can show the equivalence between e-LORD and LORD++. On the one hand, when the independent e-values $\{e_t\}$ are available at each time, then $1/e_t$ is a valid p-value by \eqref{eq:etop} in \cref{appendix:construction_e_value}, and applying LORD++ to $\{1/e_t\}$ is equivalent to applying e-LORD to $\{e_t\}$, wherein the testing levels $\{\alpha_t\}$ for e-LORD are derived following the LORD++ procedure \citep{ramdas2017online}, i.e., 
\begin{equation}\label{eq:level_LORD++}
    \alpha_t^{\operatorname{LORD++}}=\gamma_tW_0 +\left(\alpha-W_0\right) \gamma_{t-\tau_1} +\alpha \sum_{j: \tau_j<t, \tau_j \neq \tau_1} \gamma_{t-\tau_j}, 
\end{equation}
where $W_0>0$ is the initial $\alpha$-wealth, $\tau_t$ is the time of the $t$-th rejection and $\{\gamma_t\}$ is pre-specified non-negative sequence summing to one. 

On the other hand, when the independent p-values $\{p_t\}$ are available instead, then $\mathbbm{1}\{p_t\leq\alpha_t\}/\alpha_t$ is a valid e-value. This is because
\begin{equation*}\label{eq:ptoe}
    \bbE\left[\frac{\mathbbm{1}\{p_t\leq \alpha_t\}}{\alpha_t}\mid\calF_{t-1}\right]\leq 1 
\end{equation*}
due to the conditionally super-uniform property of $p_t$. Define $e_t=\mathbbm{1}\{p_t\leq\alpha_t\}/\alpha_t$ and applying LORD++ to $\{p_t\}$ is equivalent to applying e-LORD to $\{e_t\}$ with the testing levels $\{\alpha_t\}$ generated in the same manner as LORD++ \citep{ramdas2017online}. To prove it, denote $\delta_t^{p}=\mathbbm{1}\{p_t\leq\alpha_t^p\}$ and $\delta_t^{e}=\mathbbm{1}\{e_t\geq\frac{1}{\alpha_t^e}\}$ respectively to easily distinguish. It can be readily verified that $\delta_1^p=\delta_1^e$. Suppose for all $j\leq t-1$, we have $\delta_j^p=\delta_j^e$. Then for time $t$, $\alpha_t^p=\alpha_t^e$ are both generated by the same LORD++ algorithm in \eqref{eq:level_LORD++} and hence we omit the superscript. If $\delta_t^p=1$, it follows that $p_t\leq\alpha_t$, which implies that $e_t=1/\alpha_t$ and consequently $\delta_t^e=1$. Conversely, if $\delta_t^p=0$, it also holds that $\delta_t^e=0$. The proof can then be concluded through recursive reasoning. 

\paragraph{e-SAFFRON \& SAFFRON.}
It is natural to derive the relationship between e-SAFFRON and SAFFRON. In the independent scenario, we can overestimate $\FDP^{*}_{\operatorname{e-ind}}(t)$ in \eqref{eq:oracle_e_FDP_estimate_independence} by $\widehat{\FDP}_{\operatorname{e-ind}}^{\operatorname{SAFFRON}}(t):=\frac{\sum_{j=1}^{t} \alpha_j\frac{\mathbbm{1}\{e_j<1/\lambda\}}{1-\lambda}}{R_t \vee 1}$, same as $\widehat{\FDP}^{\operatorname{SAFFRON}}(t)$ in \eqref{eq:SAFFRON_upper_bound} when $p_t=1/e_t$. Therefore, the process of generating testing levels $\{\alpha_t\}$ by SAFFRON \citep{ramdas2018saffron} can be regarded as the e-SAFFRON algorithm as it satisfies $\widehat{\FDP}_{\operatorname{e-ind}}^{\operatorname{SAFFRON}}(t)\leq \alpha$. 

However, the conclusion on the equivalence between e-SAFFRON and SAFFRON differs from that between e-LORD and LORD++. When the independent e-values $\{e_t\}$ are available at each time, applying SAFFRON to valid p-values $\{p_t=1/e_t\}$ is equivalent to applying e-SAFFRON to $\{e_t\}$, wherein the testing levels $\{\alpha_t\}$ for e-SAFFRON are derived following the SAFFRON procedure \citep{ramdas2018saffron}, i.e., 
\begin{equation}\label{eq:level_SAFFRON}
    \alpha_t^{\operatorname{SAFFRON}}  = \min \left\{\lambda, W_0 \gamma_{t-C_{0+}}+\left((1-\lambda) \alpha-W_0\right) \gamma_{t-\tau_1-C_{1+}}+\sum_{j \geq 2}(1-\lambda) \alpha \gamma_{t-\tau_j-C_{j+}}\right\}. 
\end{equation}
Here $\lambda\in(0,1)$ is a user-chosen parameter and $C_{i+}(t)=\sum_{j=\tau_i+1}^{t-1}\mathbbm{1}\{p_j\leq\lambda\}$.

The situation changes when the independent p-values are available. In addition to the conditions of independence, if the p-values further satisfy the conditionally uniformly distributed in \eqref{eq:property_of_online_pval}, then there are no valid e-values in \eqref{eq:property_of_online_eval} such that applying SAFFRON to $\{p_t\}$ is equivalent to applying e-SAFFRON to those e-values, wherein the testing levels $\{\alpha_t\}$ for e-SAFFRON are derived following the SAFFRON procedure \citep{ramdas2018saffron}. 
We demonstrate it by contradiction. Suppose there exists valid e-values $\{e_t\}$ with $e_t=f^{\operatorname{c}}(p_t)$ such that applying SAFFRON to $\{p_t\}$ is equivalent to applying e-SAFFRON to $\{e_t\}$, wherein the testing levels $\{\alpha_t\}$ for e-SAFFRON are derived following the SAFFRON procedure in \eqref{eq:level_SAFFRON}. 
Since the p-values are conditionally uniformly distributed, the p-values are also continuously distributed. Note that the p-values play three different roles in the SAFFRON algorithm: rejection if $p_t\leq\alpha_t$, candidate if $\alpha<p_t\leq\lambda$, and inclusion in the estimate of $\FDP^{*}$ if $p_t>\lambda$. Correspondingly, the e-values $f^{\operatorname{c}}(p_t)$ satisfy $f^{\operatorname{c}}(p_t)\geq 1/\alpha_t$ if $p_t\leq\alpha_t$, $f^{\operatorname{c}}(p_t)\geq 1/\lambda$ if $\alpha<p_t\leq\lambda$, and $f^{\operatorname{c}}(p_t)< 1/\lambda$ if $p_t>\lambda$. By the above assumptions, we have
\begin{equation*}
    \begin{aligned}
        \bbE\left[f^{\operatorname{c}}(p_t)\right]=&\int f^{\operatorname{c}}(p)\mathbbm{1}\{p\leq\alpha_t\}dp+\int f^{\operatorname{c}}(p)\mathbbm{1}\{\alpha_t<p\leq\lambda\}dp+\int f^{\operatorname{c}}(p)\mathbbm{1}\{p>\lambda\}dp\\
        \geq&\int \frac{1}{\alpha_t}\mathbbm{1}\{p\leq\alpha_t\}dp+\int \frac{1}{\lambda}\mathbbm{1}\{\alpha_t<p\leq\lambda\}dp+\int f^{\operatorname{c}}(p)\mathbbm{1}\{p>\lambda\}dp\\
        \geq&1+\int \frac{1}{\lambda}\mathbbm{1}\{\alpha_t<p\leq\lambda\}dp\\
        >&1 
    \end{aligned}
\end{equation*}
with $\alpha_t$ generated by SAFFRON, which contradicts the assumption that $f^{\operatorname{c}}(p_t)$ is a valid e-value. In summary, there are no valid e-values transformed by p-values that also serve the three roles accordingly in e-SAFFRON. This conclusion implies that there may be a loss of power when testing by e-values compared to using p-values in the independent case.

\subsection{Alternative Expressions of e-LORD \& e-SAFFRON}\label{sec:another_form}

In this section, we provide alternative expressions of e-LORD and e-SAFFRON, respectively. These expressions may aid in better understanding the dynamics of the allocation process and how the e-LOND algorithm can be considered as a special case within the e-GAI framework with design $\omega_t$ from a given sequence $\{\gamma_t\}$. 

\paragraph{e-LORD.}
Recall the e-LORD algorithm updates testing levels as $\alpha_1=\alpha\omega_1$ and for $t\geq 2$: 
\begin{equation}\label{eq:original_eLORD++}
    \alpha_t=\omega_t\left(\alpha-\sum_{j=1}^{t-1}\frac{\alpha_j}{R_{j-1}+1}\right)(R_{t-1}+1),
\end{equation}
where $\omega_1,\ldots,\omega_t\in\calF_{t-1}$ are the allocation coefficients, updated as in \eqref{eq:update_omega}. 
By recursion and calculation, \eqref{eq:original_eLORD++} can be simplified to 
\begin{equation}\label{eq:another_eLORD}
    \alpha_t=\alpha (R_{t-1}+1)\omega_t\prod_{j=1}^{t-1}(1-\omega_j). 
\end{equation}

\paragraph{e-SAFFRON.}

We follow the same routine as above. Recall the e-SAFFRON algorithm updates testing levels as $\alpha_1=\alpha(1-\lambda)\omega_1$ and for $t\geq 2$: 
\begin{equation}\label{eq:original_eSAFFRON}
    \alpha_t=\omega_t\left(\alpha(1-\lambda)-\sum_{j=1}^{t-1}\frac{\alpha_j\mathbbm{1}\left\{e_j<\frac{1}{\lambda}\right\}}{R_{j-1}+1}\right)(R_{t-1}+1), 
\end{equation}
where $\omega_1,\ldots,\omega_t\in\calF_{t-1}$ are the allocation coefficients, updated as in \eqref{eq:update_omega}. 
By recursion and calculation, \eqref{eq:original_eSAFFRON} can be simplified to 
\begin{equation}\label{eq:another_eSAFFRON}
    \alpha_t=\alpha(1-\lambda)(R_{t-1}+1)\omega_t\prod_{j=1}^{t-1}\left(1-\omega_j\mathbbm{1}\left\{e_j<1/\lambda\right\}\right). 
\end{equation}

\eqref{eq:another_eLORD} and \eqref{eq:another_eSAFFRON} provide an alternative perspective for understanding the allocation process. In addition to the item regarding the number of previous rejections, the remaining items focus on the process of wealth allocation, in which the total wealth $\alpha$ is dynamically allocated by $\omega_t$ and the prior allocation coefficients $\{\omega_j\}_{j=1}^{t-1}$.

\subsection{Construction of Online e-values}\label{appendix:construction_e_value}

In this section, we offer useful suggestions about the construction of valid e-values that can be applied to the online setting.
Recall that a non-negative variable $e_t$ is a valid online e-value if it satisfies the conditional validity property: 
\begin{equation*}
    \bbE[e_t \mid \calF_{t-1}]\leq 1 \text{ if } \theta_t=0. 
\end{equation*}

\paragraph{Likelihood ratio e-values.} 
Assume that the null conditional distribution $F_{0,t\mid \calF_{t-1}}$ is known for all $t$ and denote the corresponding density function $f_{0,t\mid \calF_{t-1}}$. At each time $t$, the conditional distribution $F_{t\mid \calF_{t-1}}$ and the corresponding density $f_{t\mid \calF_{t-1}}$ are unknown, and we can estimate $f_{t\mid \calF_{t-1}}$ using parametric or nonparametric methods. Then we can construct the likelihood ratio e-value as
\begin{equation}\label{eq:likelihood_ratio_eval}
    e_t=\frac{\hat{f}_{t\mid \calF_{t-1}}(x_t)}{f_{0\mid \calF_{t-1}}(x_t)}
\end{equation}
for each time $t$. The e-value $e_t$ in \eqref{eq:likelihood_ratio_eval} is valid since 
\begin{equation*}
    \bbE\left[e_t\mid \calF_{t-1}\right]=\int \frac{\hat{f}_{t\mid \calF_{t-1}}(x_t)}{f_{0\mid \calF_{t-1}}(x_t)}f_{0\mid \calF_{t-1}}(x_t)dx_t=\int \hat{f}_{t\mid \calF_{t-1}}(x_t) dx_t=1. 
\end{equation*}

\paragraph{p-value-based e-values.}
In case the p-values associated with each hypothesis are available, it is possible to convert them to 
e-values using a `p-to-e calibrator' \citep{shafer2011test}, albeit with possible power loss \citep{vovk2021values}. A `p-to-e calibrator' is a decreasing function $f^{\operatorname{c}}: [0,1]\mapsto[0,\infty]$, such that $\int_{0}^{1}f^{\operatorname{c}}(s)ds=1$. Then we can construct the p-value-based e-value as
\begin{equation}\label{eq:pval_based_eval}
    e_t=f^{\operatorname{c}}(p_t)
\end{equation}
for each time $t$, where $p_t$ is the corresponding p-value. The e-value $e_t$ in \eqref{eq:pval_based_eval} is valid as long as the p-value $p_t$ is conditionally super-uniformly distributed under the null. Note that the choices for $f^{\operatorname{c}}$ vary. For example, we can simply take $f^{\operatorname{c}}(s)=\eta s^{\eta-1}$ for some $\eta\in(0,1)$ \citep{shafer2021testing,vovk2021values}.

Note that an e-value $e_t$ can be naturally transformed into a p-value $p_t$ by $p_t=\min\{1/e_t,1\}$. This is because
\begin{equation}\label{eq:etop}
    \bbP(p_t\leq u\mid\calF_{t-1})\leq \bbP(e_t\geq 1/u\mid\calF_{t-1})\leq u\bbE[e_t\mid\calF_{t-1}]\leq u
\end{equation}
for all $u\in(0,1)$ by Markov's inequality.

\section{Additional Simulation Results}

\subsection{Simulation: FDR control}\label{appendix:simulation}

In this section, we investigate the impact of different parameters on the performance of our proposed method and, based on our findings, provide recommendations for appropriate parameter choices. Simulation results under other settings are also present to show the performance of e-LORD, e-SAFFRON, pL-RAI, pS-RAI and other algorithms.

\paragraph{Updating $\omega_t$ with different parameters $\omega_1,\varphi,\psi$.}
To investigate the performance of e-GAI with varying parameters $\omega_1,\varphi,\psi$, we conduct a series of experiments on e-LORD, e-SAFFRON ($\lambda=0.1$), pL-RAI and pS-RAI ($\lambda=0.1$), following the settings outlined in \cref{sec:simulation}, with $\alpha=0.05, \rho=0.5$, $L=30$, $\mu_{\operatorname{c}}=3$, and $\pi_1=0.2$. 

We vary $\omega_1$, $\varphi$, and $\psi$, comparing the performance of the e-LORD, e-SAFFRON, pL-RAI, and pS-RAI under various parameter settings. \cref{fig:rho05_mu3_pi02_eLORD,fig:rho05_mu3_pi02_eSAFFRON,fig:rho05_mu3_pi02_pL-RAI,fig:rho05_mu3_pi02_pS-RAI} show the results of e-LORD, e-SAFFRON, pL-RAI, and pS-RAI with different parameters, respectively. As shown in \cref{fig:rho05_mu3_pi02_eLORD,fig:rho05_mu3_pi02_eSAFFRON,fig:rho05_mu3_pi02_pL-RAI,fig:rho05_mu3_pi02_pS-RAI}, the e-LORD, e-SAFFRON, pL-RAI, and pS-RAI algorithms successfully achieve FDR control across all settings, which aligns with the theoretical guarantees. The power of the algorithm is affected by variations in the parameters $\omega_1,\varphi,\psi$. When $\omega_1=0.005, \varphi = \psi = 0.05$, all of e-LORD, e-SAFFRON, pL-RAI, and pS-RAI demonstrate high statistical power. Therefore, the parameter settings are chosen as $\omega_1=0.005, \varphi = \psi = 0.05$ in our experiments. 

\begin{figure}[t]
    \vskip 0.1in
    \centering
    \subfigure[$\omega_1=0.001$]{\label{fig:init0001_eLORD}
    \includegraphics[width=0.48\columnwidth]{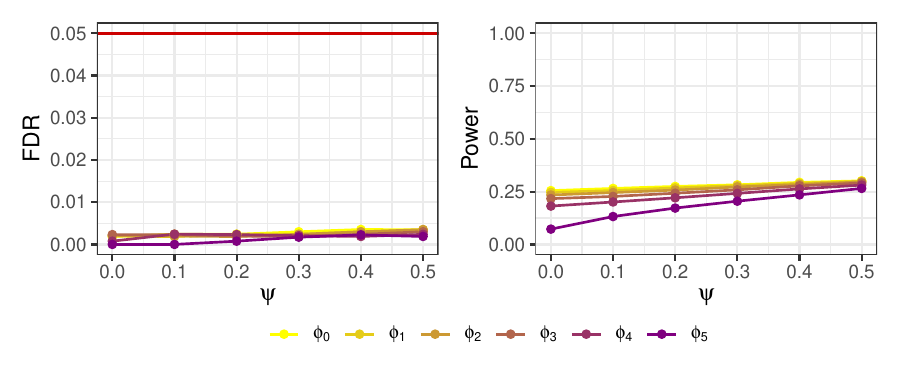}}
    \subfigure[$\omega_1=0.005$]{\label{fig:init0005_eLORD}
    \includegraphics[width=0.48\columnwidth]{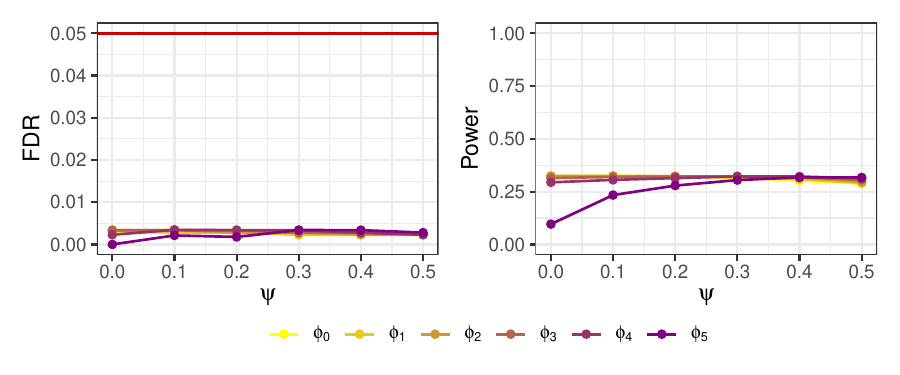}}
    \subfigure[$\omega_1=0.01$]{\label{fig:init001_eLORD}
    \includegraphics[width=0.48\columnwidth]{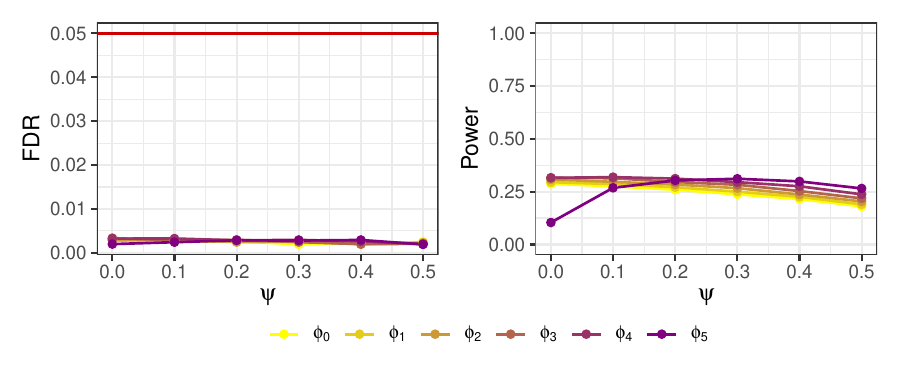}}
    \subfigure[$\omega_1=0.05$]{\label{fig:init005_eLORD}
    \includegraphics[width=0.48\columnwidth]{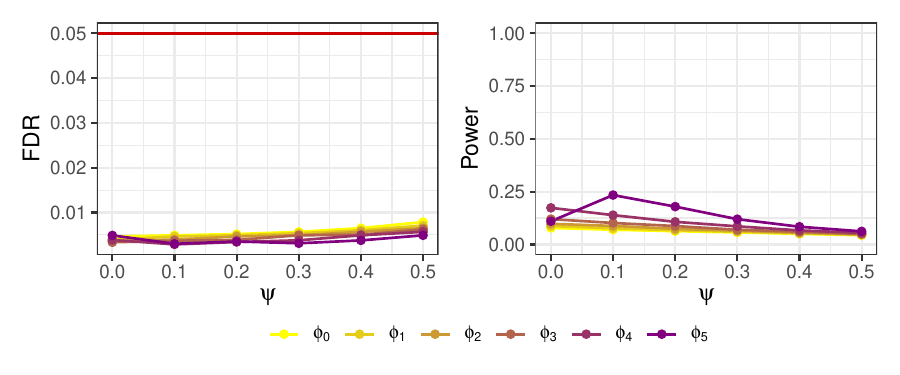}}
    \caption{Empirical FDR and power with standard error versus $\psi$ for e-LORD, with $\rho=0.5$, $L=30$, $\mu_{\operatorname{c}}=3$ and $\pi_1=0.2$. The $\varphi_0, \varphi_1, \ldots, \varphi_5$ methods correspond to $\varphi=0,0.1,\ldots,0.5$, respectively. The parameter configuration $(\omega_1 = 0.005,\phi=0.5,\varphi = 0.5)$ yielded the most favorable performance.}
    \label{fig:rho05_mu3_pi02_eLORD}
    \vskip -0.1in
\end{figure}

\begin{figure}[htbp]
    \vskip 0.1in
    \centering
    \subfigure[$\omega_1=0.001$]{\label{fig:init0001_eSAFFRON}
    \includegraphics[width=0.48\columnwidth]{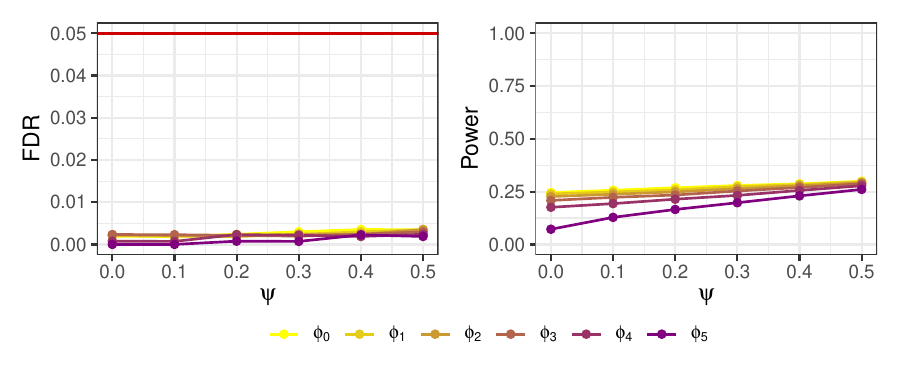}}
    \subfigure[$\omega_1=0.005$]{\label{fig:init0005_eSAFFRON}
    \includegraphics[width=0.48\columnwidth]{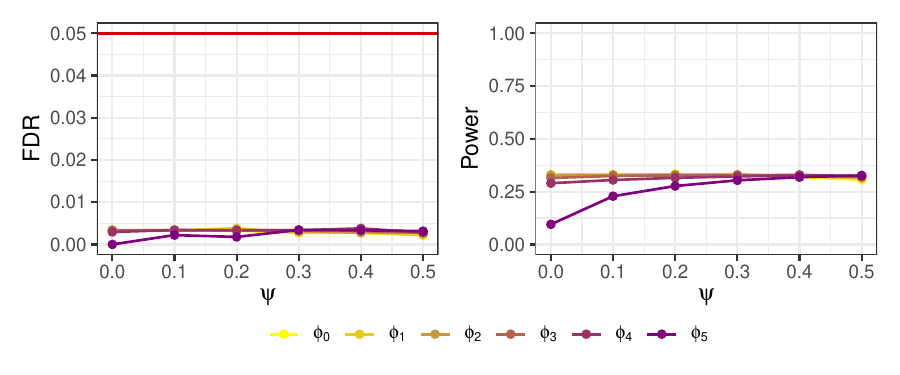}}
    \subfigure[$\omega_1=0.01$]{\label{fig:init001_eSAFFRON}
    \includegraphics[width=0.48\columnwidth]{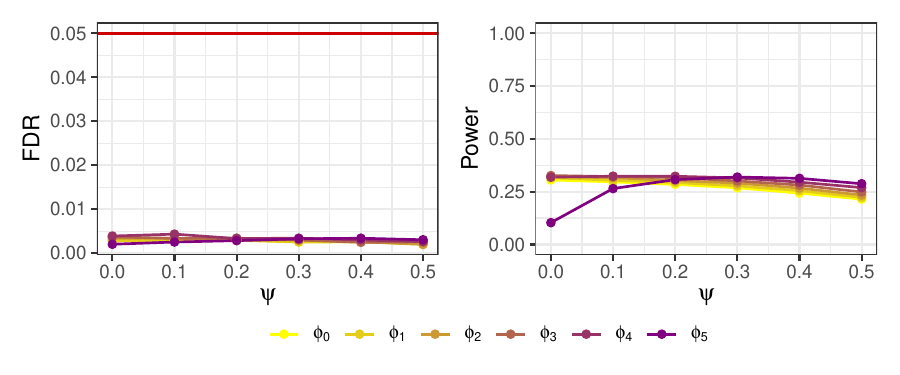}}
    \subfigure[$\omega_1=0.05$]{\label{fig:init005_eSAFFRON}
    \includegraphics[width=0.48\columnwidth]{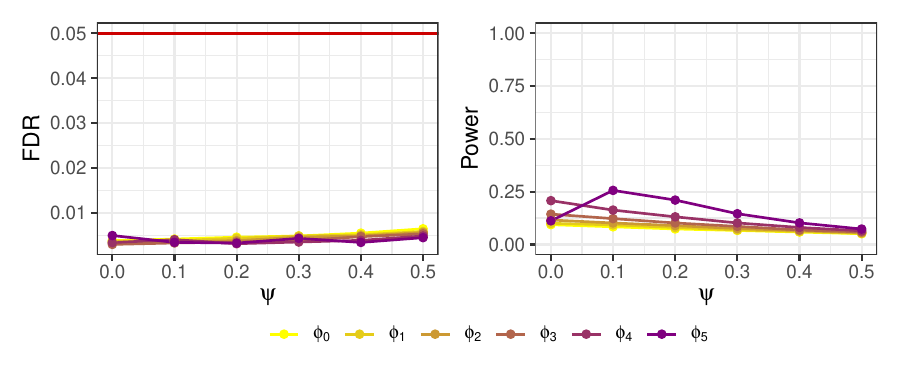}}
    \caption{Empirical FDR and power with standard error versus $\psi$ for e-SAFFRON, with $\lambda=0.1$, $\rho=0.5$, $L=30$, $\mu_{\operatorname{c}}=3$ and $\pi_1=0.2$. The $\varphi_0, \varphi_1, \ldots, \varphi_5$ methods correspond to $\varphi=0,0.1,\ldots,0.5$, respectively. The parameter configuration $(\omega_1 = 0.005,\phi=0.5,\varphi = 0.5)$ yielded the most favorable performance.
}
    \label{fig:rho05_mu3_pi02_eSAFFRON}
    \vskip -0.1in
\end{figure}

\begin{figure}[htbp]
    \vskip 0.1in
    \centering
    \subfigure[$\omega_1=0.001$]{\label{fig:init0001_pL-RAI}
    \includegraphics[width=0.48\columnwidth]{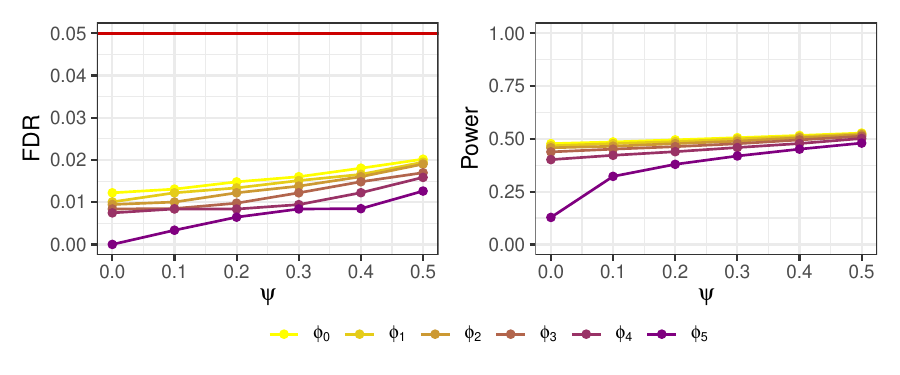}}
    \subfigure[$\omega_1=0.005$]{\label{fig:init0005_pL-RAI}
    \includegraphics[width=0.48\columnwidth]{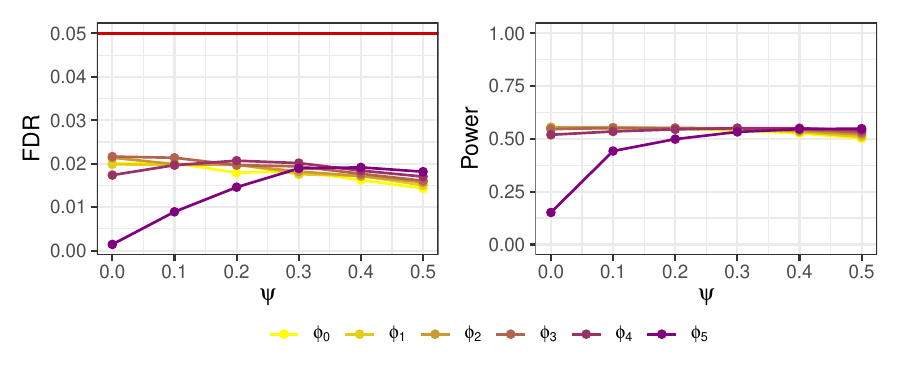}}
    \subfigure[$\omega_1=0.01$]{\label{fig:init001_pL-RAI}
    \includegraphics[width=0.48\columnwidth]{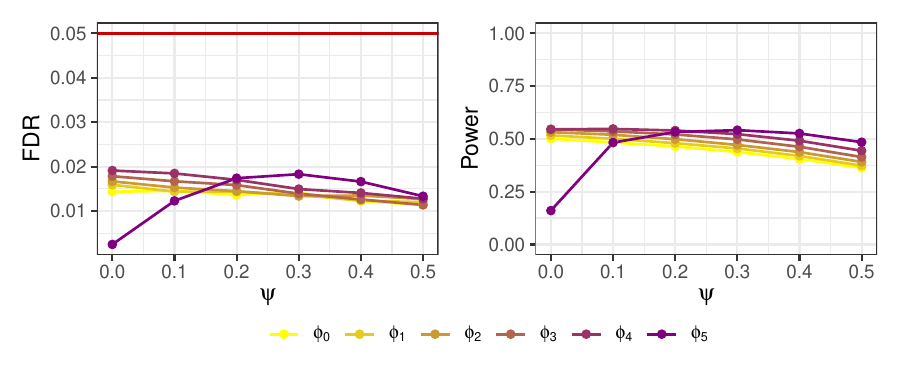}}
    \subfigure[$\omega_1=0.05$]{\label{fig:init005_pL-RAI}
    \includegraphics[width=0.48\columnwidth]{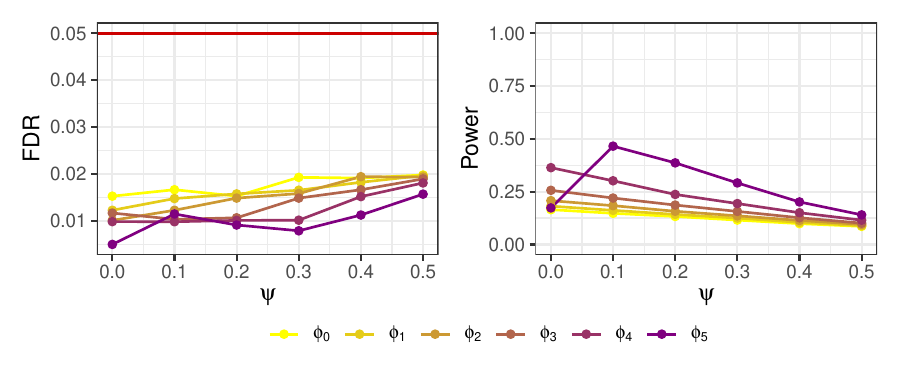}}
    \caption{Empirical FDR and power with standard error versus $\psi$ for pL-RAI, with $\rho=0.5$, $L=30$, $\mu_{\operatorname{c}}=3$ and $\pi_1=0.2$. The $\varphi_0, \varphi_1, \ldots, \varphi_5$ methods correspond to $\varphi=0,0.1,\ldots,0.5$, respectively. The parameter configuration $(\omega_1 = 0.005,\phi=0.5,\varphi = 0.5)$ yielded the most favorable performance.}
    \label{fig:rho05_mu3_pi02_pL-RAI}
    \vskip -0.1in
\end{figure}

\begin{figure}[htbp]
    \vskip 0.1in
    \centering
    \subfigure[$\omega_1=0.001$]{\label{fig:init0001_pS-RAI}
    \includegraphics[width=0.48\columnwidth]{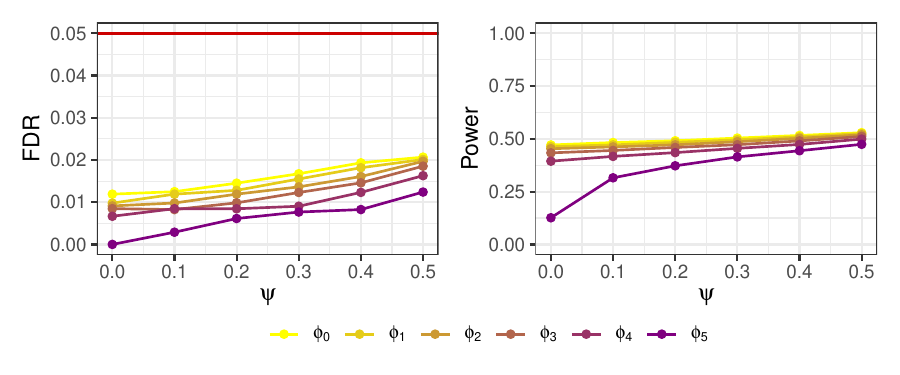}}
    \subfigure[$\omega_1=0.005$]{\label{fig:init0005_pS-RAI}
    \includegraphics[width=0.48\columnwidth]{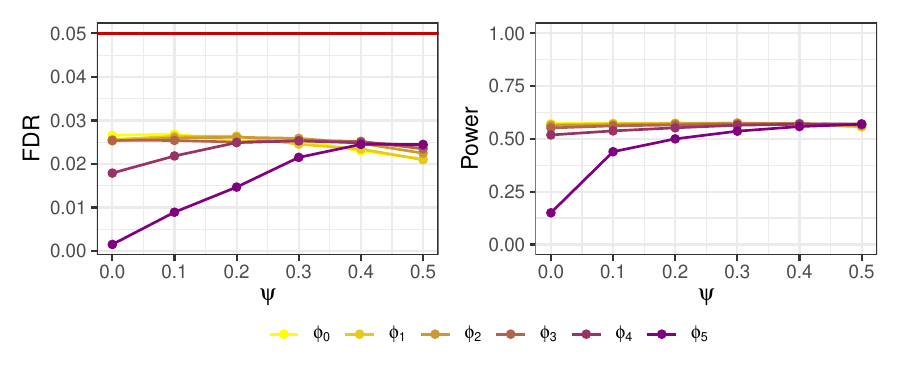}}
    \subfigure[$\omega_1=0.01$]{\label{fig:init001_pS-RAI}
    \includegraphics[width=0.48\columnwidth]{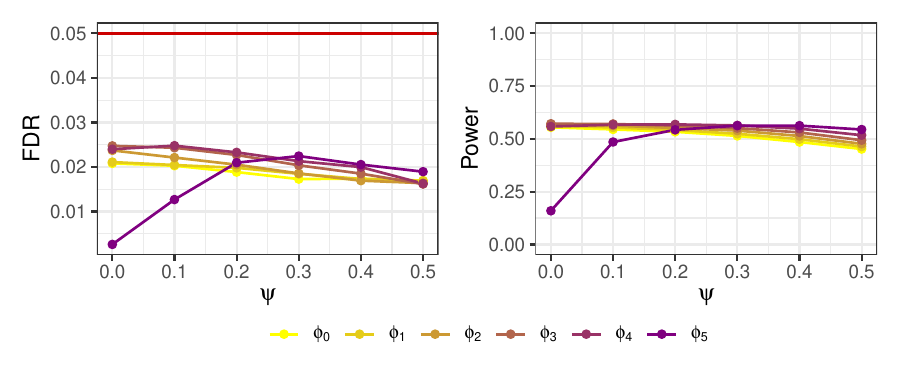}}
    \subfigure[$\omega_1=0.05$]{\label{fig:init005_pS-RAI}
    \includegraphics[width=0.48\columnwidth]{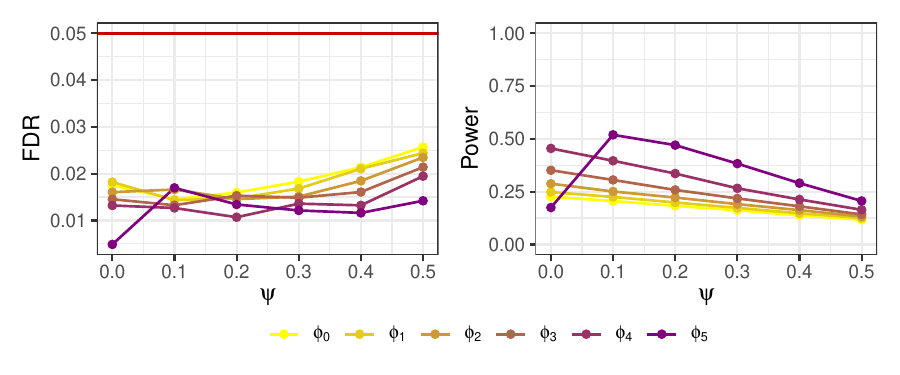}}
    \caption{Empirical FDR and power with standard error versus $\psi$ for pS-RAI, with $\lambda=0.1$, $\rho=0.5$, $L=30$, $\mu_{\operatorname{c}}=3$ and $\pi_1=0.2$. The $\varphi_0, \varphi_1, \ldots, \varphi_5$ methods correspond to $\varphi=0,0.1,\ldots,0.5$, respectively. The parameter configuration $(\omega_1 = 0.005,\phi=0.5,\varphi = 0.5)$ yielded the most favorable performance.}
    \label{fig:rho05_mu3_pi02_pS-RAI}
    \vskip -0.1in
\end{figure}

\begin{figure}[htbp]
    \vskip 0.1in
    \centering
    \subfigure[e-SAFFRON]{\label{fig:difflambda_eSAFFRON}
    \includegraphics[width=0.48\columnwidth]{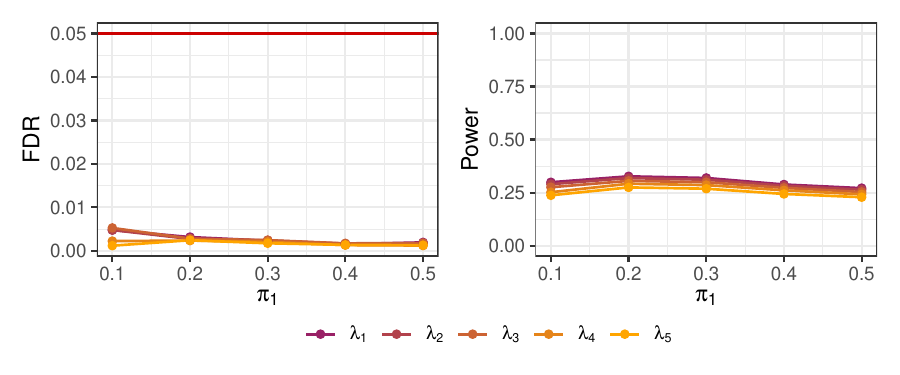}}
    \subfigure[pS-RAI]{\label{fig:difflambda_pS-RAI}
    \includegraphics[width=0.48\columnwidth]{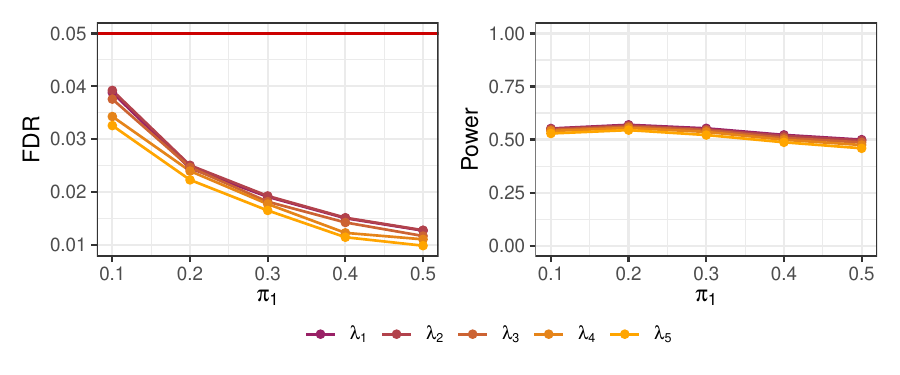}}
    \caption{Empirical FDR and power with standard error versus proportion of alternative hypotheses $\pi_1$ for e-SAFFRON and pS-RAI, with $\rho=0.5$ and $\mu_{\operatorname{c}}=3$. The $\lambda_1, \lambda_2, \ldots, \lambda_5$ methods correspond to $\lambda=0.1,0.2,\ldots,0.5$, respectively. The parameter $\lambda=0.1$ yielded the most favorable performance.}
    \label{fig:difflambda_rho09}
    \vskip -0.1in
\end{figure}

\paragraph{Using different constant parameter $\lambda$ in e-SAFFRON \& pS-RAI.}
To evaluate the performance of e-SAFFRON and pS-RAI with different values of $\lambda$, we adopt the same experimental settings as described in \cref{sec:simulation}. 

\cref{fig:difflambda_rho09} show the results of e-SAFFRON and pS-RAI with different $\lambda$. As illustrated in \cref{fig:difflambda_rho09}, both e-SAFFRON and pS-RAI maintain FDR control across all different values of $\lambda$, which is consistent with the theoretical guarantees. It is evident that when $\lambda=0.1$, both the e-SAFFRON and pS-RAI algorithms attain the highest statistical power. Therefore, we recommend setting $\lambda=0.1$ as the default value when applying the e-SAFFRON and pS-RAI algorithms.

\paragraph{Setting different correlation parameter $\rho$.} 
To assess the performance of e-LORD, e-SAFFRON, pL-RAI and pS-RAI under varying levels of dependence, we adopt the same experimental settings as described in \cref{sec:simulation}, and compare them with e-LOND, LORD++, SAFFRON and SupLORD. 

\cref{fig:diffrho} show the results of various methods with varying $\rho$ when $\mu_{\operatorname{c}}=3$. The performance of each method is similar to that in the main body. For e-value-based methods, irrespective of the correlation parameter $\rho$, e-LORD, e-SAFFRON, and e-LOND successfully achieve FDR control. Moreover, e-LORD and e-SAFFRON exhibit substantially higher statistical power than e-LOND across all settings by dynamically updating the testing levels, leading to more effective discoveries. For p-value-based methods, as shown in \cref{fig:diffrho}, SAFFRON exhibits FDR inflation under such conditions. Besides, our pL-RAI and pS-RAI always gain higher power than LORD++ and SupLORD. In the independence case ($\rho=0$), each method successfully maintains FDR control empirically, as guaranteed by their theoretical results. 

\begin{figure}[t]
    \vskip 0.1in
    \centering
    \subfigure[$\rho=0$]{\label{fig:muc3_rho0}
    \includegraphics[width=0.48\columnwidth]{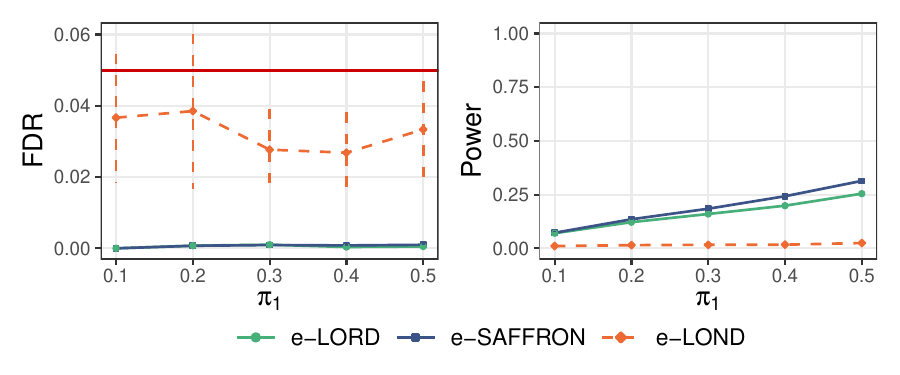}}
    \subfigure[$\rho=0$]{\label{fig:muc4_rho0}
    \includegraphics[width=0.48\columnwidth]{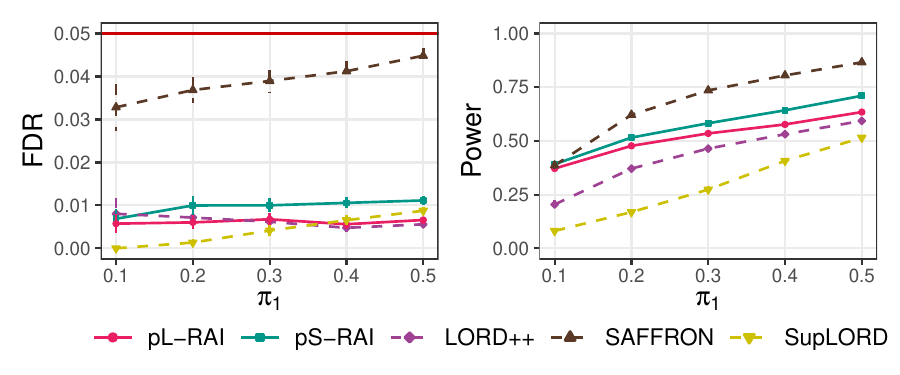}}
    \subfigure[$\rho=0.1$]{\label{fig:rho0p}
    \includegraphics[width=0.48\columnwidth]{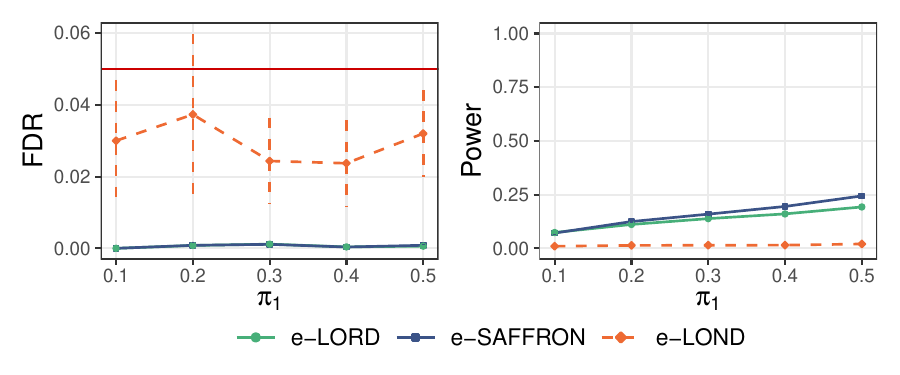}}
    \subfigure[$\rho=0.1$]{\label{fig:rho01e}
    \includegraphics[width=0.48\columnwidth]{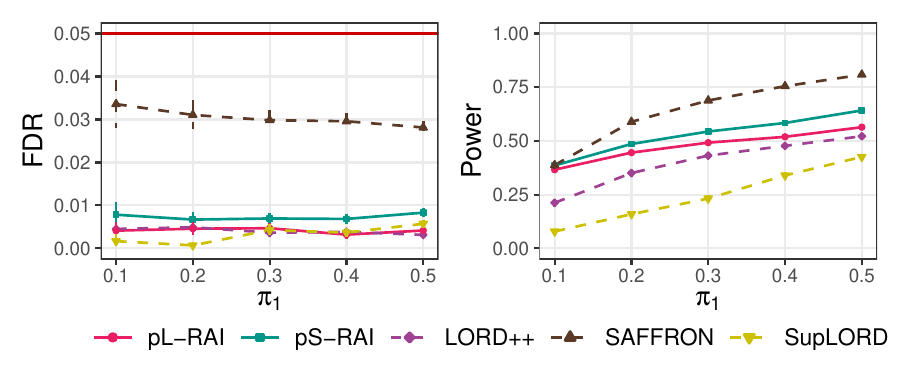}}
    \subfigure[$\rho=0.3$]{\label{fig:rho03e}
    \includegraphics[width=0.48\columnwidth]{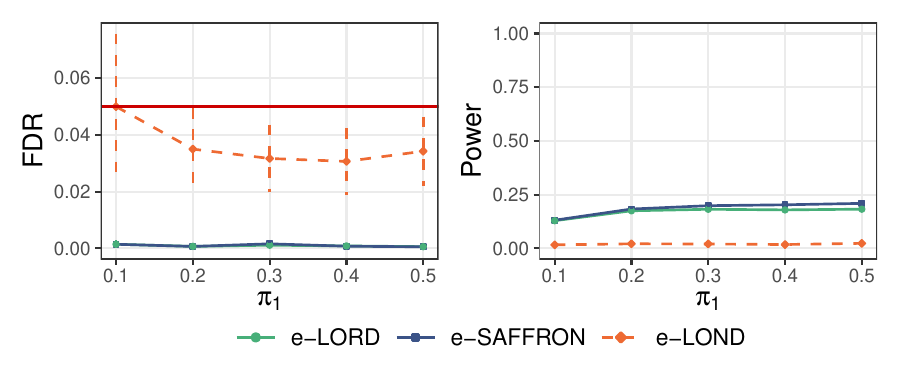}}
    \subfigure[$\rho=0.3$]{\label{fig:rho03p}
    \includegraphics[width=0.48\columnwidth]{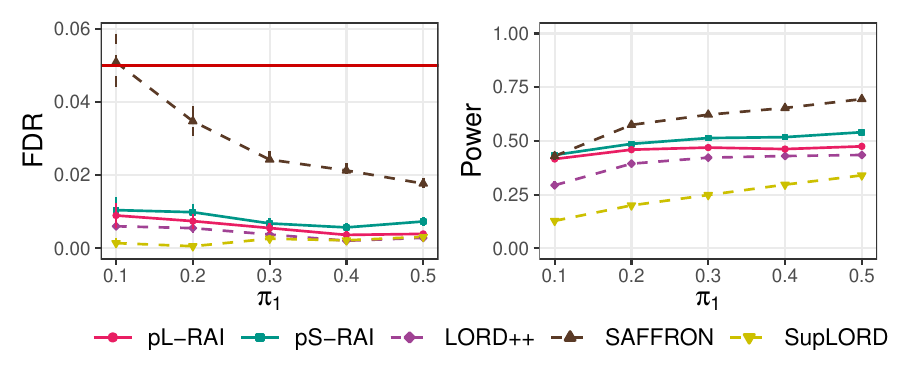}}
    \caption{Empirical FDR and power with standard error versus proportion of alternative hypotheses $\pi_1$ for e-LORD, e-SAFFRON, e-LOND, pL-RAI, pS-RAI, LORD++, SAFFRON and SupLORD, with varying $\rho$ and fixed $\mu_{\operatorname{c}}=3$. Our pL-RAI and pS-RAI consistently outperform LORD++ and SupLORD, whereas SAFFRON fails to control FDR when $\rho$ is large and $\pi_1$ is small.}
    \label{fig:diffrho}
    \vskip -0.1in
\end{figure}

\subsection{Simulation: mem-FDR control}\label{appendix:simulation_mem}

We consider a new experimental setup where the samples follow a time-varying auto-regressive $\operatorname{AR}(1)$ model. For each time $t$, $X_t = \rho_t X_{t-1} +\mu_t+ \varepsilon_t$ with $\varepsilon_t \stackrel{\operatorname{i.i.d.}}{\sim} \mathcal{N}(0,1)$, where the auto-regressive coefficient $\rho_t=\frac{2}{1 + \exp\bigl(-\eta(t - t_0)\bigr)} -1\in(-1,1)$. We set $\eta=0.01$ and $t_0=T/2$. 
We aim to test whether there is a positive drift and the null hypothesis takes $\bbH_{t}: \mu_t=0$ for each time $t\in[T]$. The true labels $\theta_t$ is generated from  $\operatorname{Bernoulli}(\pi_1)$ and the positive drift $\mu_t=\mu_{\operatorname{c}}>0$ if $\theta_t=1$. Note that the correlation coefficient $\rho_t$ between adjacent samples varies over time, resulting in a complex correlation structure.

We compare the performance of mem-e-LORD, mem-e-SAFFRON, mem-pL-RAI and mem-pS-RAI with mem-LORD++ at $T\in\{10000,20000\}$ with different proportions of alternative hypotheses $\pi_1$ and strengths of signals $\mu_{\operatorname{c}}$. We utilize the $\operatorname{AR}(1)$ model and the normal distribution of noise to calculate e-values and p-values that satisfy \eqref{eq:property_of_online_eval} and \eqref{eq:property_of_online_pval}, respectively.

\begin{figure}[t]
    \vskip 0.1in
    \centering
    \subfigure[$T=10000,\mu_c=3$]{\label{fig:mem_T10000_mu3}
    \includegraphics[width=0.48\columnwidth]{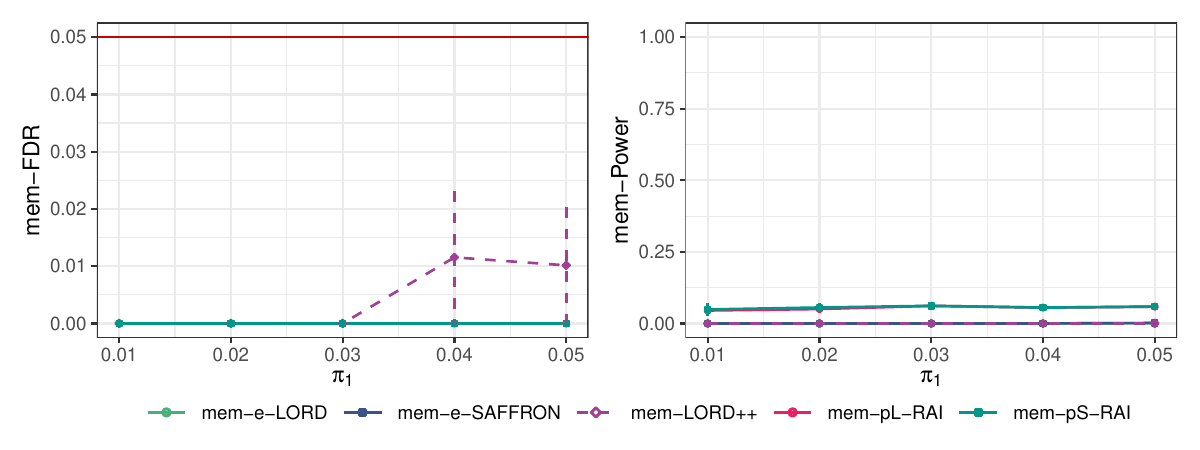}}
    \subfigure[$T=20000,\mu_c=3$]{\label{fig:mem_T20000_mu3}
    \includegraphics[width=0.48\columnwidth]{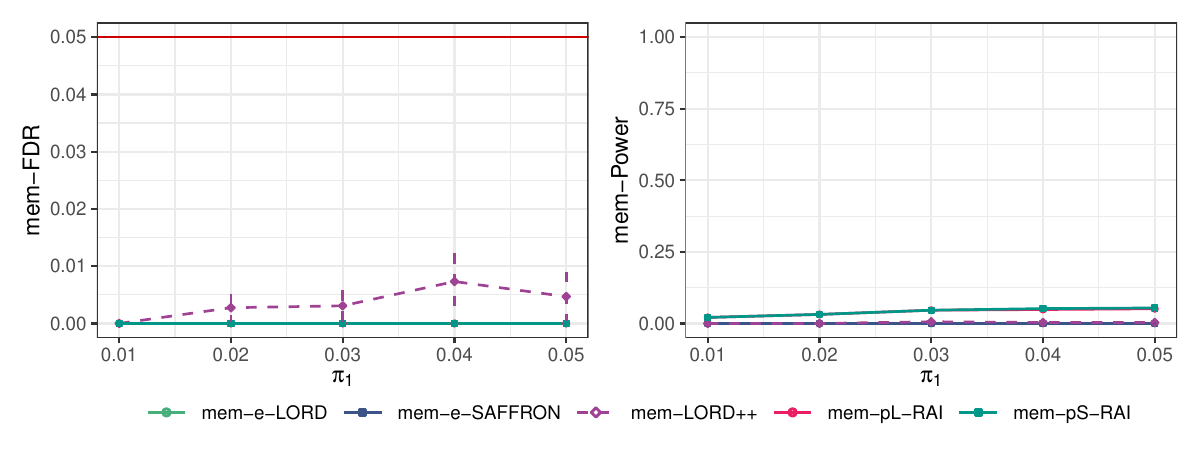}}
    
    \subfigure[$T=10000,\mu_c=4$]{\label{fig:mem_T10000_mu4}
    \includegraphics[width=0.48\columnwidth]{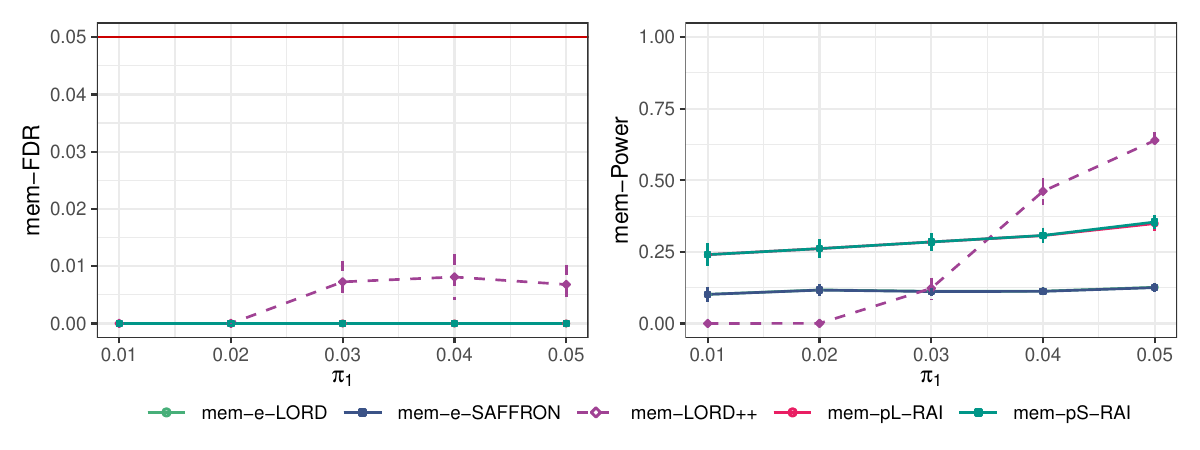}}
    \subfigure[$T=20000,\mu_c=4$]{\label{fig:mem_T20000_mu4}
    \includegraphics[width=0.48\columnwidth]{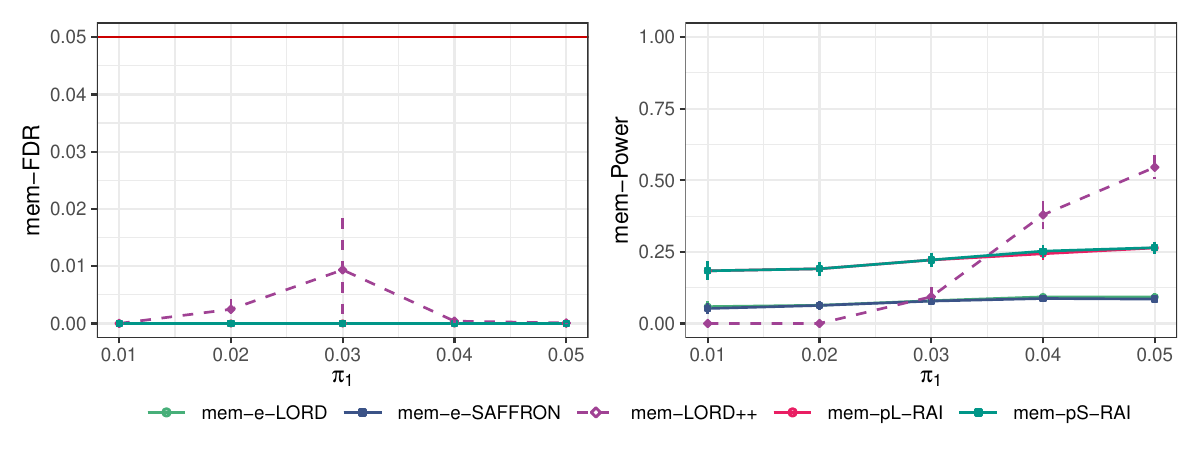}}
    \caption{Empirical mem-FDR and mem-Power with standard error versus proportion of alternative hypotheses $\pi_1$ for mem-e-LORD, mem-e-SAFFRON, mem-LORD++, mem-pL-RAI and mem-pS-RAI with $\mu_c=3,4$, at $T=10000,20000$. All procedures maintain FDR control. Our e-GAI and RAI methods exhibit superior mem-power than mem-LORD++ when $\mu_c$ or $\pi_1$ is small.}
    \label{fig:mem-FDR}
    \vskip -0.1in
\end{figure}

As shown in \cref{fig:mem-FDR}, all methods successfully control the mem-FDR. Our procedures perform well over long-term testing periods, especially with sparse alternatives. In such cases, these algorithms achieve higher mem-Power than mem-LORD++. The mem-Power refers to the decaying memory power in \citep{ramdas2017online}, defined as 
\begin{equation*}
    \operatorname{mem-Power}(t):=\mathbb{E}\left[\frac{\sum_{j\notin\mathcal{H}_0(t)}d^{t-j}\delta_j}{\sum_{j\notin\mathcal{H}_0(t)}d^{t-j}}\right].
\end{equation*}

\subsection{More results for Real Data: NYC Taxi Anomaly Detection}
\label{appendix:real_data}

\begin{figure*}[t]
\vskip 0.1in
\begin{center}
\centerline{\includegraphics[width=\textwidth]{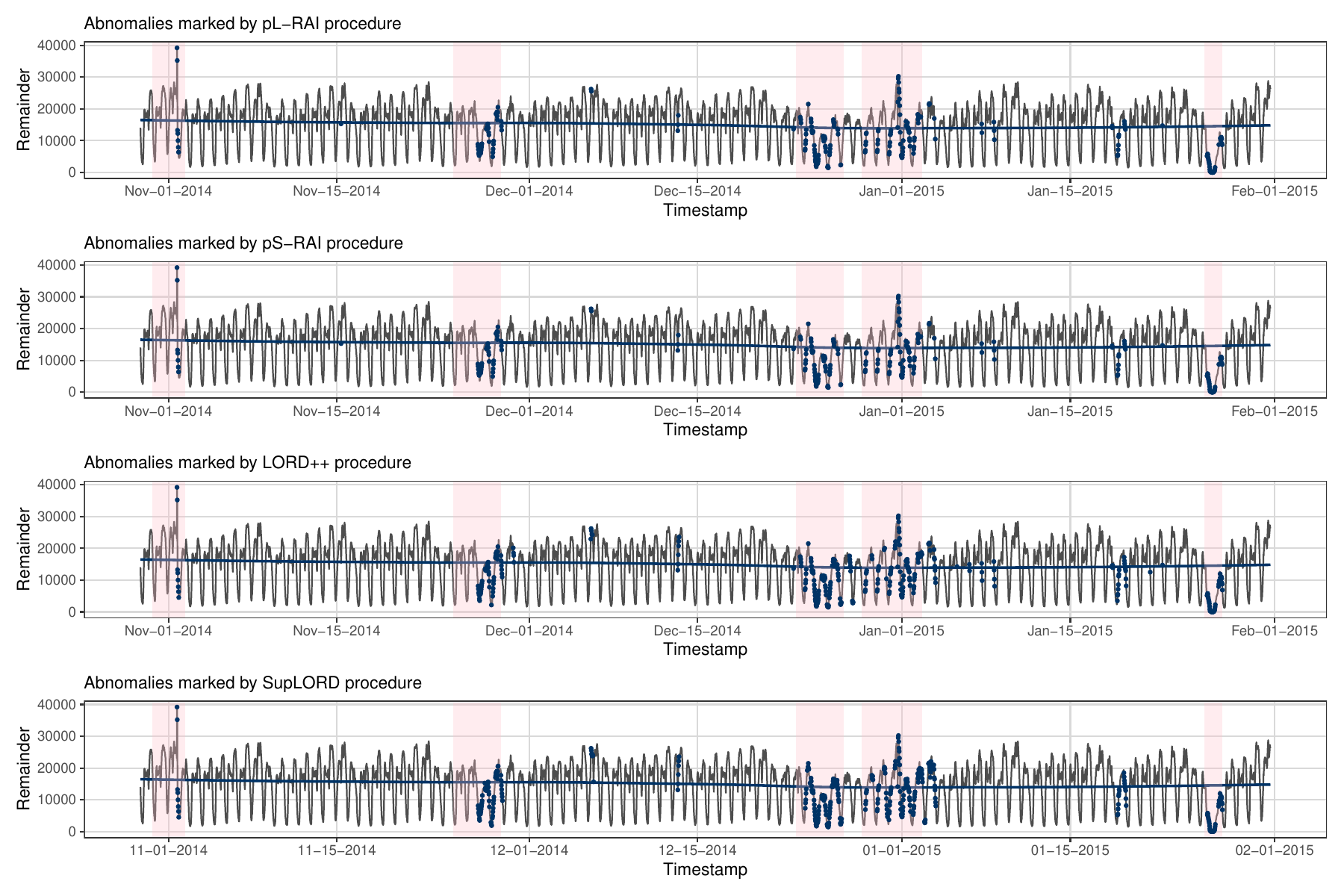}}
\caption{Anomaly points detected by pL-RAI, pS-RAI, LORD++, and SupLORD. Rejection points of all procedures are marked by dark blue points. Red regions refer to known anomalies. The testing level is chosen as 0.2.}
\label{figure:combined_taxi_plots}
\end{center}
\vskip -0.1in
\end{figure*}

We adjust $\alpha=0.2$ and apply pL-RAI, pS-RAI, SAFFRON, LORD++, and SupLORD here to analyze this dataset. We construct two-sided Gaussian p-values by estimating the mean and variance from the residuals.

We compare their performance in terms of the proportion of discoveries out of marked anomalous regions, denoted here as $\widehat{\FDP}$ and the number of discovered anomalous regions in \cref{tab:proprotion_out}. We observe that the $\widehat{\FDP}$ of LORD++ and SAFFRON far exceeds the testing level $\alpha=0.2$ and SupLORD slightly exceeds $\alpha$. Meanwhile, pL-RAI and pS-RAI effectively maintain $\widehat{\FDP}$ below the target level. As illustrated in \cref{figure:combined_taxi_plots}, pL-RAI and pS-RAI identify many points within the anomalous regions.

\begin{table*}[htbp]
\caption{Proportion of points rejected out of anomalous regions.}
\label{tab:proprotion_out}
\vskip 0.1in
\begin{center}
\begin{small}
\begin{tabular}{lccccc}
\toprule
Method &pL-RAI &pS-RAI &LORD++ &SAFFRON &SupLORD\\
\midrule
$\widehat{\FDP}$ & 0.197 & 0.195 & 0.261 & 0.361 & 0.217\\
Num Discovery & 201 & 259 & 257 & 595 & 406\\
\bottomrule    
\end{tabular}
\end{small}
\end{center}
\vskip -0.1in
\end{table*}

\end{document}